 \newtheorem{example}[theorem]{Example}
 \numberwithin{equation}{section}
\newcommand{\beq}{\begin{equation}}
\newcommand{\eeq}{\end{equation}}
\newcommand{\bR}{\mathbb{R}}
\newcommand{\R}{\bR}
\newcommand{\Rn}{\bR^n}
\newcommand{\Rmn}{\bR^{m\times n}}
\newcommand{\argmin}{\mathop{\mathrm{argmin}}}
\newcommand{\sgn}{\mathrm{sgn}}
\newcommand{\norm}[1]{\left\Vert #1\right\Vert}
\newcommand{\ip}[2]{\left\langle #1,\, #2\right\rangle}
\newcommand{\map}[3]{#1:\, #2\rightarrow #3}
\renewcommand{\supp}[1]{\mathrm{supp}(#1)}
\newcommand{\eps}{\epsilon}
\definecolor{newcolor}{rgb}{.8,.349,.1}
\newcommand{\RN}[1]{%
  \textup{\uppercase\expandafter{\romannumeral#1}}%
}
\title    {On the Global Minimizers of Real Robust Phase Retrieval with Sparse Noise
}
\author{%
  Aleksandr Aravkin\footnote{Department of Applied Mathematics, University of Washington}%
  \and James Burke\footnote{Department of Mathematics, University of Washington} 
   \and Daiwei He\footnote{Department of Mathematics, University of Washington}%
  }\date{\today}
\begin{document}

% Title
\maketitle
\begin{abstract}
We study a class of real robust phase retrieval problems 
under a Gaussian assumption on the coding matrix when the received
signal is sparsely corrupted by noise. The goal is to 
establish conditions on the sparsity
under which the
input vector can be exactly recovered. The recovery problem is
formulated as the minimization of the $\ell_1$ norm of the residual.
% where the residual vector is given as either the error in the square measurements or the absolute measurements.
%
%\blue{We study a class of real robust phase retrieval problems
%which aim to recover 
%a real unknown vector using only absolute (or square) magnitudes of
%amatrix vector product where the matrix entries are standard Gaussian. set of Gaussian observations. 
%We formulate the problem using the 1-norm penalty between observed and predicted magnitudes, 
%and so all of the problems we consider are nonsmooth and nonconvex.}
%
%\[
%	\min_x f_p(x) := \norm{|Ax|^p - b}_1
%\]
%where $p = 1$ or $p = 2$, $A \in \mathbb{R}^{m \times n}$ and $b \in \mathbb{R}^m$. These are non-convex optimization problems. In our study, we assume that the entries of the measurement matrix $A$ are i.i.d. standard Gaussian and show that there are constants $C_p>0$ and $s_p \in (0, 1)$ depending only on $p$ such that if 
%if $m \geq C_pn$ and the signal $x_*$ is such that
%$\norm{|Ax_*|^p - b}_0 := |\{i: (|Ax_*|^p - b)_i \neq 0\}| \leq s_p m$, then $\{x_*,-x_*\}$ are the only global minimizers of the problem
%$\min f_p$ with high probability. 
The main contribution is a robust phase retrieval counterpart %to the characterization of the minimizer for compressed sensing ($\ell_1$ regression) in 
to the seminal paper by Candes and Tao on compressed sensing ($\ell_1$ regression)
[\emph{Decoding by linear programming}. IEEE Transactions on Information Theory, 51(12):4203–4215, 2005].
Our analysis depends on a key new property on the coding matrix which we call the 
\emph{Absolute Range Property} (ARP). 
This property is an analogue to the Null Space Property (NSP) in compressed sensing. 
%\blue{When using squared magnitudes of gaussian observations,} %When $p=2$, 
When the residuals are computed using squared magnitudes, 
we show that
ARP follows from a standard Restricted Isometry Property (RIP). However, when the residuals are computed using absolute magnitudes,
a new and very different kind of RIP or growth property is required.
We conclude by showing that the robust phase retrieval objectives are sharp with respect to their minimizers with high probability.
%In particular, prox-linear method \cite{duchi2017solving} and subgradient descent method with geometrically decreasing stepsize \cite{charisopoulos2019low} converges locally linearly,  with high probability.
\end{abstract}

%\section{Potential outline for introduction}
%
%\begin{enumerate}
%\item Problem setting: why is the problem important? (brief but punchy)
%\item Path from biggest relevant results to our contribution
%\item Detailed explanation of contribution 
%\item Relevant work (close to us) and how we are different.  
%\end{enumerate}

\section{Introduction}
Phase retrieval has been widely studied in machine learning, signal processing and optimization. The goal of phase retrieval is to recover a signal $x$ provided the observations of the amplitude of its linear measurements:
\begin{equation}\label{ic_101}
	|\ip{a_i}{x}| = b_i, \quad  1 \leq i \leq m
\end{equation}
where $a_i \in \mathbb{C}^n$ or $\mathbb{R}^n$, $b_i \in \mathbb{R}$ are observations, 
and  $x$ is an unknown variable we wish to recover (e.g. see \cite{luke2002optical}). A well studied form of the phase retrieval problem is
\begin{equation}\label{ic_102}
|\ip{a_i}{x}|^2 = b_i, \quad 1\leq i \leq m,
\end{equation}
where $b_i$ now represent the squared magnitudes of the observations. %$b_i^2$ is rewritten more simply as $b_i$.
It is shown in \cite{pardalos1991quadratic} that the phase retrieval problem is NP-hard. Recent work on the phase retrieval problem \cite{charisopoulos2019low, duchi2017solving, davis2017nonsmooth, candes2015phase, candes2014solving} focuses on the real phase retrieval problem where it is assumed that $a_i \in \mathbb{R}^n$ for each $i=1, 2, ..., m$. This is the line of inquiry we follow. 
In the following discussion 
the $m$ rows of the matrix $A \in \Rmn$ are the vectors $a_i \in \Rn$. 

%\blue{we use $A \in \mathbb{R}^{m \times n}$ to denote the matrix with $m$ rows $a_i \in \mathbb{R}^n$.}

The two most popular approaches to the real phase retrieval problem are through semidefinite programming relaxations \cite{balan2006signal, candes2014solving, candes2013phaselift, chen2015solving, demanet2014stable,  li2013sparse, waldspurger2015phase} and convex-composite optimization \cite{burke1985descent, davis2017nonsmooth, duchi2017solving}. These approaches formulate real phase retrieval problem as an optimization problem of the form
\begin{equation}\label{new_intro_1}
	\min_x \rho(|Ax|^2 - b),
\end{equation} 
where $\rho$ is chosen to be either the $\ell_1$ or the square of $\ell_2$ norm, and,  for any vector $z \in \mathbb{R}^m$, $|z|$ and $z^2 $ are vectors in $\mathbb{R}^m$ whose components are the absolute value and squares of those in $z$. 
The objective in \eqref{new_intro_1} is a composition of a convex and a smooth function, and is called convex-composite. 
This structure plays a key role in both optimality conditions and algorithm development for \eqref{new_intro_1} \cite{burke1985descent}.

 In the noiseless case, when there exists a vector $x_* \in \mathbb{R}^n$ such that $|Ax_*|^2 = b$ (or, $|Ax_*| = b$), a gradient based method called Wirtinger Flow (WF) 
 was introduced by \cite{candes2015phase} to solve the smooth problem
 \[\min_x \norm{|Ax|^2 - b}_2^2.\] 
 WF  admits a linear convergence rate when properly initialized. Further work along this line includes the Truncated Wirtinger Flow (TWF), e.g., see \cite{chen2015solving}. Truncated Wirtinger Flow requires $m \geq Cn$ measurements as opposed to the $m \geq Cn\log n$ measurements in WF to obtain a linear rate. A similar approach using sub-gradient is used to minimize $\min_x \norm{|Ax| - b} _2^2$ in \cite{wang2018solving} for the noiseless case. 

\noindent
{\bf Contributions.}
%Here when $|\cdot|$ and $(\cdot)^p$ apply to a vector it means apply it to each entry of the vector. 
In this paper we address two forms of the \emph{robust} phase retrieval problem, where the optimization objective takes the form
%where $\rho$ is the $\ell_1$ norm, as well as the problem 
\begin{equation}\label{new_intro_2}
\min_x f_p(x) :=  \norm{|Ax|^p - b}_1\quad\mbox{for $p=1,2$},
\end{equation}
and it is assumed that the matrix $A$ satisfies the following Gaussian assumption:
\[
\mathrm{G:}\qquad\mbox{The entries of $A$ are i.i.d. standard Gaussians $N(0, 1)$.}
\]
Our goal is to establish a robust phase retrieval counterpart %to the characterization of the minimizer for compressed sensing ($\ell_1$ regression) in 
to the seminal paper by Candes and Tao on compressed sensing ($\ell_1$ regression) \cite{candes2005decoding}. 

Compressed sensing problems \cite{donoho2006compressed} take the form
\begin{equation}\label{cs_00}
	\min_y \norm{y}_1 \text{ such that } \Phi y = c, 
\end{equation}
where $\Phi \in \mathbb{R}^{n \times N}, y \in \mathbb{R}^N, c \in \mathbb{R}^n$.
% There exists a matrix $A \in \mathbb{R}^{N \times (N - n)}$ such that $\text{Null}(\Phi) = \text{Range}(A)$ (In fact we choose the columns of $A$ be an basis of $\text{Null}(\Phi)$). Also take $b$ such that $\Phi b = -c$. In this case $\{y | \Phi y = c\} = \{y|y = -b + \text{Null}(\Phi)\} = \{y|y = A x-b \text{ for some } x\}$. Compressed sensing is equivalent to the following 
This problem is known to be equivalent to the $\ell_1$ linear  regression problem
\begin{equation}\label{ll1}
	\min_x \norm{Ax - b}_1	,
\end{equation}
where $\Phi b = -c$ and $A \in \mathbb{R}^{N \times (N - n)}$ (e.g., 
the columns of $A$ form basis of $\text{Null}(\Phi)$).
In \cite{candes2005decoding} it is shown that there is a universal constant 
$s \in (0, 1)$ such that, 
under suitable conditions on $A$ (e.g., Assumption G), 
if $x^*$ satisfies $\norm{Ax_* - b}_0 \leq sm$, then $x^*$ is the unique solution
to \eqref{ll1}, with high probability. We prove similar exact recovery results for the two robust phase retrieval problems \eqref{new_intro_2}. In particular, we show that 
$\{x_*,-x_*\}=\argmin f_p$ with high probability, when $m \geq 2n-1$ (Theorem \ref{g_thm}).
%However, uniqueness is not possible 
%in \eqref{new_intro_2} since if $x_*$ is a solution, so is
%$-x_*$.
%Consequently we are not concerned with algorithmic solution
%techniques. 
In this situation, the solution set to $\min f_p$ and the $\ell_0$ phase retrieval problem coincide, that is, 
\begin{equation}\label{ctri_l00}
	\{x_*, -x_*\} = \argmin_{x} \norm{|Ax|^p - b}_0.
\end{equation}
Thus, the $\ell_0$ phase retrieval problem can be solved by the $\ell_1$ phase retrieval problem $\min f_p$, when there exists an $x_*$ with sufficiently sparse noise.

A key underlying structural requirement used by \cite{candes2005decoding} is the Restricted Isometry Property (RIP). 
%In the case $p=2$ in \eqref{new_intro_2}, 
We also make use of an RIP property in the $p=2$ case. However, 
in the $p=1$ case a new property, which we call the p-Absolute Growth Property (p-AGP) (see Definition \ref{def:agp}), is required. 
When $p=2$, RIP implies 2-AGP. The p-AGP holds under 
Assumption G, with high probability (see Lemmas \ref{ic_lm_0} and  \ref{lm:new6}).
%But our property has a fundamentally
%different flavor in the $p=1$ case.
%(called RIP in \cite{charisopoulos2019low}) 
A second key property, which mimics the so-called Null Space Property (NSP) in compressed sensing \cite{cohen2009compressed,daubechies2010iteratively, donoho2001uncertainty, gribonval2002sparse}, 
is also introduced. We call this the
p-Absolute Range Property (p-ARP) (see Definition \ref{def:ARP}), and
show that p-AGP implies p-ARP under Assumption G with high probablility.
%This property is used to 
%establish the p-Absolute Range Property (p-ARP) (see Definition \ref{def:ARP}).
%The p-ARP is designed to mimic the so-called Null Space Property (NSP) in compressed sensing \cite{cohen2009compressed,daubechies2010iteratively, donoho2001uncertainty, gribonval2002sparse}. 
In \cite{candes2008restricted}, it is shown that, for problem \eqref{cs_00}, if $\Phi$ satisfies RIP with parameter $\delta_{2s} < \sqrt{2}-1$, then $\Phi$ satisfies NSP of order $s$. 
Correspondingly, we show that the p-AGP implies the p-ARP
with high probability under Assumption G.
(see Lemmas \ref{ic_lm_main} and \ref{ic_lm_main_2}).
%, and, when $p=1$, the new RIP-like property implies
%1-ARP (see Lemma \ref{ic_lm_main_2}).
%In addition, under Assumption G,  we show that these RIP-like properties hold, with high probability (see Lemma \ref{ic_lm_0} and Lemma \ref{lm:new6} for $p=2$ and $p=1$, respectively).  

%With respect to algorithmic methods for solving \eqref{new_intro_2},
There are separate classes of methods for solving \eqref{new_intro_2} for $p=2$ and $p=1$.
When $p = 1$, one can apply a smoothing method to the absolute value function \cite{irls,luke2002optical}, 
or use other relaxation techniques that preserve the nonsmooth objective but introduce auxiliary variables~\cite{zheng2018relax}.
When $p = 2$, the solution methods typically exploit the convex-composite structure of the objective $f_2$. 
These methods rely on two key conditions on the function $f_2$: weak convexity (i.e., $f + \frac{\rho}{2} \norm{\cdot}^2$ is convex for some $\rho > 0$) and sharpness (i.e., $f(x) - \min f \geq c \cdot \text{dist}(x, \mathcal{X})$ for some $c > 0$ where $\mathcal{X}$ is the set of minimizers of $f$). Under these two properties, Duchi and Ruan \cite{duchi2017solving}, Drusvyatskiy, Davis and Paquette \cite{davis2017nonsmooth} and Charisopoulos, et al.\cite{charisopoulos2019low} establish convergence and iteration complexity results for prox-linear and subgradient algorithms.
Recently \cite{zhang2016provable} and \cite{chen2017robust} considered gradient-based methods for the problem $\min_x f_2(x)$ when the noise is $sm$ sparse for some $s < 1$. 
To establish locally linear convergence of their algorithms
the authors of
\cite{zhang2016provable} require that the measurements 
satisfy $m \geq cn\log n$ for $c > 0$, while 
the authors of \cite{chen2017robust} require that $s<c/\log m$ for some $c>0$. The results in \cite{duchi2017solving} and \cite{charisopoulos2019low} require 
$m \geq cn$ for some $c>0$ and for some $s \in [0, \frac{1}{2})$ sufficiently small.
% for \cite{duchi2017solving, charisopoulos2019low}.
 
%In our setting weak convexity come from the work of these authors. 

Conditions for the weak convexity of $f_2$ follow from results in \cite{duchi2017solving, davis2017nonsmooth} under assumptions weaker than Assumption G. 
In the noiseless case, the sharpness of $f_2$ also follows from 
	results in \cite{duchi2017solving, davis2017nonsmooth}. In the noisy case, sharpness is established in \cite{duchi2017solving, davis2017nonsmooth} under same assumptions on the sparsity of the noise. 
	
	 We establish sharpness for both $f_1$ and $f_2$ under Assumption G uniformly for all possible supports of the sparse noise. Our result for $p=2$ case has a similar flavor to those in \cite{duchi2017solving, charisopoulos2019low}, but more closely 
	 parallels the result of Candes and Tao in the compressed sensing case. When $p=1$, our result has no precedence in the literature and requires a new approach. 
	 The function $f_1$ is not weakly convex since it is not even subdifferentially regular \cite{luke2002optical}.
	 
 This paper is organized as follows. 
 In section 2, we introduce the
 new properties p-ARP and p-AGP and provide a detailed 
 description of how our program of proof parallels the program 
 used in compressed sensing.
 In Section 3, we show that if $A$ satisfies p-ARP and the residual 
 $|\,|Ax_*|^p-b|$ is sufficiently sparse, then $\{\pm x_*\}\subset\argmin f_p$ with
 equality under Assumption G.
 In section 4, we show that Assumption G implies that p-AGP 
 implies p-ARP with high probability. 
 In the last section we show that $f_p$ is sharp with respect to $\argmin f_p$, with high probability.
	 	
\subsection{Notation}Lower case letters (i.e. $x$, $y$) denote vectors, while $x_i$ denotes the $i$th component of the $x$. 
$c_0, c_1, c_2, \tilde{c_0}, \tilde{c_1}, C$ denote universal constants. $\norm{x}$, $\norm{x}_1$ denote the Euclidean and $\ell_1$ norms of vector x, while 
$\norm{x}_0$ denotes the $\ell_0$ `norm' $ |\{i | x_i \neq 0\}|$. For a matrix $X$, $\norm{X}_F$ denotes the Frobenius and $\norm{X}$ denotes the $\ell_2$ operator norm. When $x = (x_i)_{1\leq i \leq n}$ is a vector, $|x|:= (|x_i|)_{1\leq i \leq n}$ and $x^p:= (x_i^p)_{1\leq i \leq m}$. 
%We say $m = \Omega(n)$ if and only if $m \geq Cn$ for some constant $C$. 
For a vector $v \in \mathbb{R}^m$, and $T \in [m]:= \{1,2,...,m\}$, $v_T$ is defined to be a vector in $\mathbb{R}^m$ where the $i$th entry is $v_i$ if $i \in T$ and $0$ else where.  
$\supp{x} := \{i | x_i \neq 0\}$. We say a vector $x$ is $L$ sparse if $\norm{x}_0 := |\supp{x}| \leq L$. 

\section{The Roadmap}
Recall from the compressed sensing literature \cite{cohen2009compressed,daubechies2010iteratively} that a matrix $\Phi \in \Rmn$ satisfies Null Space Property (NSP) of order $L$ at $\psi \in (0,1)$ if 
\begin{equation}\label{nsp}
\norm{y_T}_1 \leq \psi \norm{y_{T^c}}_1\quad
\forall\,y \in \text{Null}(\Phi)\text{ and }|T| \leq L.
\end{equation}
It is shown in \cite{donoho2001uncertainty, gribonval2002sparse} that
every $L$-sparse signal $y_* \in \mathbb{R}^m$ is the unique minimizer of the compressed sensing problem \eqref{cs_00} with $b = \Phi y_*$ if and only if $\Phi \in \mathbb{R}^{p \times m}$ satisfies NSP of order $L$ for some $\psi \in (0, 1)$. 
NSP of order $L$ is implied by the Restricted Isometry Property (RIP) for a sufficiently small RIP
parameter $\delta_{2L}$~\cite{candes2008restricted}, where a matrix $\Phi \in \mathbb{R}^{p \times m}$ is said to satisfy RIP with constant $\delta_L$ if~\cite{candes2005decoding}  
\begin{equation}\label{cs_rip}
(1 - \delta_L)\norm{y}_2^2 \leq \norm{\Phi y}_2^2 \leq (1+\delta_L)\norm{y}_2^2 \quad	
\mbox{$\forall\, L$-sparse vectors $y \in \mathbb{R}^m$}.
\end{equation}
%In particular, it is shown in \cite{candes2008restricted} that if $\Phi \in \mathbb{R}^{p \times m}$ satisfies RIP with $\delta_{2L} < \sqrt{2}-1$, then $\Phi$ satisfies NSP of order $L$ for some $\psi \in (0, 1)$. 
It is known that RIP is satisfied under many distributional hypothesis on the matrix $\Phi$, for example, random matrices $\Phi$ with entries i.i.d. Gaussian or Bernoulli random variables are known to satisfies RIP with high probability for $L \leq Cm/\log m$ for constant $C$ \cite{baraniuk2008simple, candes2005decoding, candes2006near, rudelson2008sparse}. Recapping, the general pattern of the proof for establishing that sufficiently sparse $y_*$ is the unique minimizer of problem \eqref{cs_00} using distributional assumptions on 
$\Phi$ is given in the following program:
\begin{center}
\text{(CS)}\qquad\begin{tikzcd}
\begin{matrix}\text{Distributional}\\ \text{Assumptions}	
\end{matrix}
  \arrow[r, Rightarrow, "\cite{candes2005decoding}"] & \text{RIP} \arrow[r, Rightarrow, "\cite{candes2008restricted}"] & \text{NSP}\arrow[rr, Leftrightarrow, "\cite{donoho2001uncertainty, gribonval2002sparse}"] && y_* \text{ minimizes \eqref{cs_00}}.
\end{tikzcd}
\end{center}

We extend this program to the class of robust phase retrieval problems
\begin{equation}\label{ic_0}
	\min_x f_p(x) := \norm{|Ax|^p - b}_1
\end{equation}
for $p\in \{1,2\}$, to show that, under Assumption G, and when the residuals  $|Ax_*|^p - b$ are sufficiently sparse,
the vectors $\pm x_*$ are the global minimizers of the real robust phase retrieval problems \eqref{ic_0} with high probability. 
 In our program, we substitute NSP and RIP with new properties called the $p$-Absolute Range Property (p-ARP) and the $p$-Absolute Growth Property (p-AGP), respectively.
%
%Consider the global minimizer of the robust phase retrieval 
%
%In this section we provide a framework for identifying global minimizer . This framework is built on a sparsity concept which we call the \emph{p-Absolute Range Property}. This property is
%adapted to the problem \eqref{ic_0} and is motivated by the \emph{Null Space Property} (NSP) used in  \cite{cohen2009compressed,daubechies2010iteratively, donoho2001uncertainty, gribonval2002sparse}.

\begin{definition}
[p-Absolute Range Property (p-ARP)] \label{def:ARP}

\noindent
	For $p \in \{1,2\}$, we say $A\in\Rmn$ satisfies the p-Absolute Range Property of order $L_p$ for $\psi_p \in (0,1)$ 
	 %$\psi \in (0, 1)$ for $y \in \mathbb{R}^n$ 
	 if, for any $x, y \in \mathbb{R}^n$ and for any $T \subseteq [m]$ with $|T| \leq L_p$,
	\begin{equation}\label{eq:p-ARP}
		\norm{(|Ax|^p - |Ay|^p)_T}_1 \!\leq\! \psi_p \norm{(|Ax|^p - |Ay|^p)_{T^c}}_1
		\
		\forall\, x, y \in \mathbb{R}^n\,\text{and }
		T \!\subseteq\! [m]\text{ with }|T|\! \leq \!L_p.
	\end{equation}
\end{definition}
%A discussion of conditions under which $p$-ARP is satisfied is given in the next section.
In order for Definition~\ref{def:ARP} to make sense, $m$ must be significantly
larger than $n$. This is illustrated by the following example.

\begin{example}
	For  $p\in\{1,2\}$, an example in which ARP does not hold for
	any order $L$ is $A = I_n$ for any $\psi \in (0,1)$. An example in which ARP of order $L=1/3$ holds is $A = (I_n, I_n, I_n)^T$ 
	 for any $\psi \in  [\frac{1}{2},1)$.
\end{example}
%\begin{rem}
%	Note that as long as ARP holds for any $T \subseteq [m]$ with $|T| = L$, then it holds for any $T \subseteq [m]$ with $|T| \leq L$.
%\end{rem}

%\begin{rem}\label{rem_nsp}
	The connection between $p$-ARP and NSP is seen by observing 
	the parallels between \eqref{eq:p-ARP} the fact that
	$\Phi$ satisfies NSP of order $L$ for $\psi \in (0,1)$ \eqref{nsp} if 
\[
	\norm{(Ax - Ay)_T}_1 \leq \psi \norm{(Ax - Ay)_{T^c}}
	\quad \forall\, x, y \in \mathbb{R}^n\text{ and }T \subseteq [m]
	\text{ with }|T| \leq L,
	\]
where the columns of $A$ form a basis of $\text{Null}(\Phi)$.
%which corresponds to the p-Absolute Range Property for $p = 1$ or $ p=2 $. 
%		\end{rem}
		
\begin{definition}[p-Absolute Growth Property (p-AGP)]\label{def:agp} 
For $p \in \{1, 2\}$, we say that the matrix $A \in \Rmn$ satisfies the p-Absolute Growth Property if there exists constants $0<\mu_1<\mu_2< 2\mu_1$ and a mapping $\map{\phi_p}{\Rn \times\Rn}{\R_+}$ such that
\begin{equation}\label{eq:agp}
	\mu_1 \phi_p(x, y) \leq \frac{1}{m} \norm{|Ax|^p - |Ay|^p|}_1 \leq \mu_2 \phi_p(x, y)
	\quad \forall\, x, y \in \mathbb{R}^n.
\end{equation}
\end{definition}

The mapping $\phi_p$ is introduced to accommodate the fact that the robust phase retrieval problem cannot have unique solutions since if $x_*$ solves \eqref{ic_0} then so does $-x_*$. For this reason, \eqref{eq:agp} implies that if $x = \pm y$, then $\phi_p(x, y) = 0$. In what follows, we take
\begin{equation}\label{phi_p defined}
	\phi_2(x, y) := \norm{xx^T - yy^T}_F\ \mbox{ and }\ 
%\label{eq:M defined}
\phi_1(x,y) := \min\{\norm{x+y}, \norm{x-y}\}\quad\, \forall\, x,y\in\Rn.
\end{equation}

The relationship between RIP and p-AGP is now seen by comparing \eqref{cs_rip} with \eqref{eq:agp}. 
A fundamental (and essential) difference is that RIP for compressed sensing applies to any selection of $L$ columns from $\Phi$ where $L$ is considered to be small since it determines the sparsity of the solution. On the other hand, our p-AGP
applies to the rows of $A$ corresponding to the zero entries in the sparse residual vector $|Ax_*|^p-b$.

 %the argument in compressed sensing. It is shown in previous section that if $A$ satisfies p-ARP of order $L$ for $\psi \in (0, 1)$, then every $x_*$ with $\norm{|Ax_*|^p - b}_0 \leq sm$ is a global minimizer of problem \eqref{ic_0}. However, we don't have a natural counterpart of RIP in the robust phase retrieval setting by the nature of nonlinearity of the problem. An RIP-like property is required to link Assumption G and the counterpart of NSP, p-ARP. See the following diagrams for the working flow in the robust phase retrieval (RPR) setting,

%\begin{center}
%\begin{tikzcd}
%\text{Assumption G}  \arrow[rr, Rightarrow, "Lemma \ \ref{ic_lm_0}" blue, "Lemma \ \ref{lm:new6}" 'green] && \text{RIP-like property} \arrow[rr, Rightarrow, "Lemma \ \ref{ic_lm_main}" blue, "Lemma \ \ref{ic_lm_main_2}" 'green] && \text{p-ARP}\arrow[ddd, Rightarrow, "Theorem \ \ref{g_thm}"']
%\\
%\\
%\\
%&&&& x_* \text{ minimizes RPR}
%\end{tikzcd}
%\end{center}
%where the blue letters represent $p=2$ and green letters represent $p=1$.

We can now more precisely describe how our program of proof parallels the one used for compressed sensing. 

\begin{enumerate}
\item[$p=2$]:
\begin{tikzcd}
\begin{matrix}\text{\small G}% \\ \text{\small G}	
\end{matrix}  
\arrow[rr, Rightarrow, "\mathrm{Lem} \ \ref{ic_lm_0}"] 
&& 
\text{RIP}\Rightarrow \text{2-AGP} \arrow[rr, Rightarrow, "\mathrm{Lem} \ \ref{ic_lm_main}"] 
&& 
\text{2-ARP}\arrow[rr, Rightarrow, "\mathrm{Thm} \ \ref{g_thm}"] 
&& 
\begin{matrix}x_* \text{\small minimizes}\\ f_2(x)	
\end{matrix}
\end{tikzcd}

\item[$p=1$]:
\begin{tikzcd}
\begin{matrix}\text{\small G} %\\ \text{\small G}	
\end{matrix}  
\arrow[rr, Rightarrow, "\mathrm{Lem} \ \ref{lm:new6}"] && \text{1-AGP} \arrow[rr, Rightarrow, "\mathrm{Lem} \ \ref{ic_lm_main_2}"] && \text{1-ARP}\arrow[rr, Rightarrow, "\mathrm{Thm} \ \ref{g_thm}"] && 
\begin{matrix}x_* \text{\small minimizes}\\ f_1(x)\end{matrix}
\end{tikzcd}
\end{enumerate}

\section{Global minimization under p-ARP}\label{sec:global} %in robust phase retrieval} 
%\red{At the start, please say more about what the purpose of this section is, e.g. In this section, we prove an analogue for the phase retrieval problem 
%to the famous compressive sensing exact recovery result by so and so. The main result is presented in Theorem 2.5.  }
%We consider a special structure of $A$ in this paper, i.e. all the entries of $A$ are i.i.d. standard gaussians.

%\textbf{Assumption 1. Assume the entries of $A$ are i.i.d. standard Gaussians}.
In this section we parallel the discussion given 
in \cite{daubechies2010iteratively} with NSP replaced by p-ARP.
 %We begin by introducing 
 We begin by introducing a measure of residual sparsity. For a vector $y \in \mathbb{R}^n$, let $T\subseteq [m]$ be the set of indices corresponding to the $L$ largest entries in 
 the residual vector $||Ax|^p - b|$ and define 
 $$\sigma^p_L(x) := \norm{(|Ax|^p - b)_{T^c}}_1.$$ 
 Note that $\sigma_L^p(x) = 0$ if and only if $\norm{|Ax|^p - b}_0 \leq L$. 
 %The proof strategy is similar to \cite{daubechies2010iteratively}, which characterize the global minimizer of compressed sensing. 
\begin{lemma}\label{ic_lm_1}
Let $A\in\Rmn$, $p\in\{1,2\}$ and $L\in(0,m)$. If the matrix $A$ satisfies p-ARP of order $L$ for $\psi \in (0, 1)$, then
	\begin{equation}\label{eq:reverse_triangle}
		\norm{|Ax|^p - |Ay|^p}_1 \leq \frac{1+\psi}{1-\psi} 
		\left(\norm{|Ax|^p - b}_1 - \norm{|Ay|^p - b}_1 + 2\sigma^p_{L}(y)\right),
	\end{equation}
for all $x, y \in \mathbb{R}^n$.
%\begin{enumerate}
%\item  
%	If $p = 1$ and  
%	$A$ satisfies 1-ARP of order $L$ at $(\psi,y)$ 
%	with $\psi \in (0, 1)$,  then
%	the inequality \eqref{eq:reverse_triangle} holds for all $x \in \mathbb{R}^n$.
%\item
%	If $p = 2$ and $A \in \Rmn$ satisfies 2-ARP of order $L$ 
%	at $\psi \in (0, 1)$, then \eqref{eq:reverse_triangle} holds for all $x, y \in \mathbb{R}^n$.
%\end{enumerate}
\end{lemma}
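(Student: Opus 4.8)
The plan is to mirror the classical null-space-property stability argument (as in \cite{daubechies2010iteratively}), with the difference vector $h := |Ax|^p - |Ay|^p$ playing the role that a null-space element plays in compressed sensing. The essential observation is that $p$-ARP supplies precisely the inequality $\norm{h_T}_1 \leq \psi\norm{h_{T^c}}_1$ that replaces the null-space condition, so that \eqref{eq:reverse_triangle} should follow from two applications of the reverse triangle inequality once the index set $T$ is chosen correctly. The crucial (and non-symmetric) choice is to tie $T$ to the residual of $y$, not of $x$.

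Concretely, I would first fix $T \subseteq [m]$ to be the set of the $L$ largest-in-magnitude entries of the residual $||Ay|^p - b|$, so that by the definition of $\sigma_L^p$ we have $\norm{(|Ay|^p - b)_{T^c}}_1 = \sigma^p_L(y)$ and $\norm{(|Ay|^p - b)_T}_1 = \norm{|Ay|^p - b}_1 - \sigma^p_L(y)$. Writing $|Ax|^p - b = (|Ay|^p - b) + h$ and splitting the $\ell_1$ norm over $T$ and $T^c$, I apply the reverse triangle inequality on each block, keeping $(|Ay|^p - b)_T$ as the dominant term on $T$ and $h_{T^c}$ as the dominant term on $T^c$. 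Substituting the two $\sigma^p_L(y)$ identities above and rearranging then yields
\[
\norm{|Ax|^p - b}_1 - \norm{|Ay|^p - b}_1 + 2\sigma^p_L(y) \;\geq\; \norm{h_{T^c}}_1 - \norm{h_T}_1 .
\]

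The next step is to invoke $p$-ARP for this particular $T$ (legitimate since $|T| \leq L$), giving $\norm{h_T}_1 \leq \psi\norm{h_{T^c}}_1$. This bounds the right-hand side above from below by $(1-\psi)\norm{h_{T^c}}_1$, while the same ARP inequality bounds the full norm from above via $\norm{h}_1 = \norm{h_T}_1 + \norm{h_{T^c}}_1 \leq (1+\psi)\norm{h_{T^c}}_1$. Chaining these gives
\[
\norm{|Ax|^p - b}_1 - \norm{|Ay|^p - b}_1 + 2\sigma^p_L(y) \;\geq\; (1-\psi)\norm{h_{T^c}}_1 \;\geq\; \tfrac{1-\psi}{1+\psi}\norm{h}_1 ,
\]
which is exactly \eqref{eq:reverse_triangle} after multiplying through by $\tfrac{1+\psi}{1-\psi}$.

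I do not anticipate a genuine obstacle: once $T$ is anchored to $y$'s residual, every inequality is forced. The only point demanding care is the bookkeeping of the triangle-inequality directions, so that the $\sigma^p_L(y)$ contributions arising separately from $T$ (through $\norm{(|Ay|^p-b)_T}_1 = \norm{|Ay|^p-b}_1 - \sigma^p_L(y)$) and from $T^c$ combine into the single factor $2\sigma^p_L(y)$. It is worth emphasizing that the quantity appearing on the left is the \emph{full} norm $\norm{h}_1$ rather than a sparsity-restricted quantity; this is what makes the estimate a true reverse triangle inequality, and it is exactly what $p$-ARP is designed to deliver.
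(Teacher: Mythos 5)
Your proof is correct and follows essentially the same route as the paper's: anchor $T$ to the $L$ largest entries of $||Ay|^p-b|$, use triangle/reverse-triangle inequalities on the $T$ and $T^c$ blocks to produce the $2\sigma_L^p(y)$ term, apply $p$-ARP once to control $\norm{h_T}_1$, and finish with $\norm{h}_1 \leq (1+\psi)\norm{h_{T^c}}_1$. The only difference is organizational—you bound $\norm{|Ax|^p-b}_1$ from below while the paper bounds $\norm{h_{T^c}}_1$ from above—but the inequalities used are identical up to rearrangement.
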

\begin{proof} In either case 1 or 2 above, let $T$ be the set of indices of the $L$ largest entries in $||Ay|^p - b|$. Then
	\begin{equation}\label{ic_4}\begin{aligned}
		\norm{(|Ax|^p - |Ay|^p)_{T^c}}_1 & \leq \norm{(|Ax|^p - b)_{T^c}}_1 + \norm{(|Ay|^p - b)_{T^c}}_1  \\
		& = \norm{|Ax|^p - b}_1 -\norm{(|Ax|^p - b)_T}_1 + \sigma^p_{L}(y)  \\
		& = \norm{(|Ay|^p - b)_T}_1 - \norm{(|Ax|^p - b)_T}_1 \\
		&\qquad\qquad + \norm{|Ax|^p - b}_1  - \norm{|Ay|^p - b}_1  + 2\sigma^p_{L}(y)  \\
		& \leq \norm{(|Ax|^p - |Ay|^p)_T}_1 + \norm{|Ax|^p - b}_1  - \norm{|Ay|^p - b}_1 + 2\sigma^p_{L}(y) .
		\end{aligned}
	\end{equation}
	By p-ARP,
	\begin{equation} \label{ic_3}
		\norm{(|Ax|^p - |Ay|^p)_T}_1 \leq \psi \norm{(|Ax|^p - |Ay|^p)_{T^c}}_1.
	\end{equation}
 Consequently, by \eqref{ic_4} and \eqref{ic_3},
	\begin{equation}\label{ic_10}
		\norm{(|Ax|^p - |Ay|^p)_{T^c}} \leq \frac{1}{1-\psi}(\norm{|Ax|^p - b}_1 - \norm{|Ay|^p - b}_1 + 2\sigma^p_{L}(y)).
	\end{equation}
	By \eqref{ic_3}, we know 
	\[\begin{aligned}
	\norm{|Ax|^p - |Ay|^p} 
	&= \norm{(|Ax|^p - |Ay|^p)_T}_1 + 
	\norm{(|Ax|^p - |Ay|^p)_{T^c}}_1 \\
	&\leq (1+\psi) 
	\norm{(|Ax|^p - |Ay|^p)_{T^c}}_1.
	\end{aligned}
	\] 
	By combining this with \eqref{ic_10}, we obtain \eqref{eq:reverse_triangle} which holds true for all
	$x,y\in \Rn$.
\end{proof}

The main result of this section now follows.
\begin{theorem}\label{g_thm}
Let $L\in(0,m)$, $p \in \{1,2\}$, and suppose $x_*\in\Rn$ is such that $(|Ax_*|^p - b)$ is $L$ sparse. Let the assumptions of Lemma \ref{ic_lm_1} holds.
%	Under the assumption of Lemma \eqref{ic_lm_1}, in the case of either $p = 1$ or $p = 2$, assume that there exists a vector $y=x_*$ such that $(|Ax_*|^p - b)$ is $L$ sparse. 
	Then $x_*$ is a global minimizer of the robust phase retrieval problem \eqref{ic_0}. Moreover, for any $x$, 
		\[
			\norm{|Ax|^p - |Ax_*|^p}_1 \leq \frac{2(1+\psi)}{1-\psi} \sigma^p_L(x).
		\]
If $\tilde{x}$ is another global minimizer, then $|A\tilde{x}| = |Ax_*|$. If it is further assumed that the entries of $A$ are i.i.d. standard Gaussians and $m \geq 2n-1$, then, with probability 1, $x_*$ is the unique solution of \eqref{ic_0} up to multiplication by $-1$. 
\end{theorem}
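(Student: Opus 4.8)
The plan is to obtain the three deterministic assertions as immediate consequences of Lemma~\ref{ic_lm_1}, and to treat the final probability-1 uniqueness statement separately as a genericity argument about Gaussian matrices. The starting observation is that, since $(|Ax_*|^p - b)$ is $L$-sparse, $\norm{|Ax_*|^p-b}_0\le L$ and hence $\sigma_L^p(x_*)=0$. Applying Lemma~\ref{ic_lm_1} with $y=x_*$ then gives, for every $x$,
\[
\norm{|Ax|^p - |Ax_*|^p}_1 \le \frac{1+\psi}{1-\psi}\left(\norm{|Ax|^p - b}_1 - \norm{|Ax_*|^p - b}_1\right).
\]
Because the left-hand side is nonnegative and $\frac{1+\psi}{1-\psi}>0$, this forces $\norm{|Ax|^p - b}_1 \ge \norm{|Ax_*|^p - b}_1 = f_p(x_*)$, so $x_*$ is a global minimizer.

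For the quantitative estimate I would instead apply Lemma~\ref{ic_lm_1} with first argument $x_*$ and second argument $x$, obtaining
\[
\norm{|Ax|^p - |Ax_*|^p}_1 \le \frac{1+\psi}{1-\psi}\left(\norm{|Ax_*|^p - b}_1 - \norm{|Ax|^p - b}_1 + 2\sigma_L^p(x)\right),
\]
where the first two terms are $\le 0$ by the global optimality of $x_*$ just established; this yields the stated bound $\norm{|Ax|^p - |Ax_*|^p}_1 \le \frac{2(1+\psi)}{1-\psi}\sigma_L^p(x)$. For a second global minimizer $\tilde x$, I would apply Lemma~\ref{ic_lm_1} once more with $x=\tilde x$ and $y=x_*$: now $\norm{|A\tilde x|^p - b}_1 = \norm{|Ax_*|^p - b}_1$ (equal optimal values) and $\sigma_L^p(x_*)=0$, so the right-hand side vanishes, giving $\norm{|A\tilde x|^p - |Ax_*|^p}_1 = 0$. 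Thus $|A\tilde x|^p = |Ax_*|^p$ entrywise, and since $p\in\{1,2\}$ and all entries are nonnegative, taking $p$-th roots gives $|A\tilde x| = |Ax_*|$.

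The main obstacle is the final probability-1 claim, which is precisely the classical injectivity-up-to-sign phenomenon for real phase retrieval; by the previous step it reduces to showing that, almost surely, the only solutions of $|Ay| = |Ax_*|$ are $y=\pm x_*$. The key probabilistic input is that, for i.i.d.\ Gaussian entries, every $n\times n$ submatrix of $A$ is nonsingular with probability $1$: each such determinant is a nontrivial polynomial in the entries, hence nonzero almost surely, and there are only finitely many submatrices. On this full-measure event I would argue combinatorially. The equation $|Ay| = |Ax_*|$ means $\ip{a_i}{y} = \pm\ip{a_i}{x_*}$ for every $i$, so writing $[m]=S_+\cup S_-$ according to the sign, the rows indexed by $S_+$ are orthogonal to $y-x_*$ while those indexed by $S_-$ are orthogonal to $y+x_*$. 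If $y\neq\pm x_*$, then $y-x_*\neq 0$ and $y+x_*\neq 0$, so neither $S_+$ nor $S_-$ can contain $n$ rows (those would span $\R^n$ by nonsingularity and could not all be orthogonal to a nonzero vector); hence $|S_+|,|S_-|\le n-1$ and $m\le |S_+|+|S_-|\le 2n-2$, contradicting $m\ge 2n-1$. Therefore $y=\pm x_*$, and since $f_p(-x_*)=f_p(x_*)$ the minimizer set is exactly $\{x_*,-x_*\}$ with probability $1$.
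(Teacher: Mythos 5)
Your proof is correct, and for the deterministic assertions it is essentially identical to the paper's: both proofs apply Lemma~\ref{ic_lm_1} with $y=x_*$ (using $\sigma_L^p(x_*)=0$) to get global optimality, then with the arguments swapped to get the bound $\frac{2(1+\psi)}{1-\psi}\sigma_L^p(x)$, and read off $|A\tilde x|=|Ax_*|$ for any other minimizer from the first inequality. Where you genuinely depart from the paper is the final probability-$1$ uniqueness claim: the paper simply cites Corollary~2.6 of Balan--Casazza--Edidin, whereas you give a self-contained argument --- every $n\times n$ submatrix of a Gaussian matrix is almost surely nonsingular (nontrivial polynomial determinant, finitely many submatrices), and then $|Ay|=|Ax_*|$ with $y\neq\pm x_*$ splits $[m]$ into $S_+\cup S_-$ with rows orthogonal to $y-x_*\neq 0$ and $y+x_*\neq 0$ respectively, forcing $|S_\pm|\le n-1$ and hence $m\le 2n-2$, contradicting $m\ge 2n-1$. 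This is in effect a reproof of the cited result (it is the standard general-position/complement-property argument for real phase retrieval injectivity), and it is sound: the full-measure event does not depend on $y$, so the conclusion holds simultaneously for all $y$. The citation keeps the paper shorter; your version makes the theorem self-contained and makes visible exactly where the threshold $m\ge 2n-1$ enters, which the paper's proof leaves opaque.
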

\begin{proof}
	By lemma \ref{ic_lm_1}, since $\sigma^p_L(x_*) = 0$,  
	\begin{equation}\label{s}
		\norm{|Ax|^p - |Ax_*|^p}_1 \leq \frac{1+\psi}{1-\psi} (\norm{|Ax|^p - b}_1 - \norm{|Ax_*|^p - b}_1)\quad\forall\, x\in\Rn,
	\end{equation}
	and so $\norm{|Ax|^p - b}_1 \geq \norm{|Ax_*|^p - b}_1$ for all $x$, i.e., $x_*$ is a global minimizer. Again by Lemma \ref{ic_lm_1}, 
	\begin{align}
		\norm{|Ax|^p - |Ax_*|^p}_1 & \leq \frac{1+\psi}{1-\psi}(\norm{|Ax_*|^p - b}_1 - \norm{|Ax|^p - b}_1 + 2\sigma^p_L(x)) \\
		& \leq \frac{2(1+\psi)}{1-\psi} \sigma^p_L(x)
 	\end{align}
Inequality \eqref{s} also implies that if there is another minimizer $\tilde{x}$, then $|Ax_*| = |A\tilde{x}|$. 
	The final statement on the uniqueness of $x_*$ is established in 
\cite[Corollary 2.6]{balan2006signal}. 
%that  under Assumption 1 with $m \geq 2n-1$, then with probability 1, $x_*$ is the unique solution to the phase retrieval problem $\{|A_i x| = |A_i x_*|,\text{for }i = 1,2,...,m\}$ up to multiplication by $-1$. 
	\end{proof}
%\red{say again that this result is the analogue of the sparse recovery for the phase retrieval problem. Include a transition, like, 
%in the next section, we show that the p-ARP we introduced holds for a wide class of problems.  }

In the next section we show that under Assumption G, p-ARP of order $L = sm$ holds for a sufficiently small constant $s$, with high probability.
	
%\section{The $p$-ARP property under a Gaussian hypothesis} 
\section{Assumption G $\implies$ p-AGP $\implies$ p-ARP}
In this section we use of the Gaussian Assumption G on the matrix $A$ to show that p-AGP holds for $A$ with high probability, and that p-AGP implies p-ARP  of order $L:=sm$ with high probability for a constant $s \in (0, 1)$. The cases $p=2$ and $p=1$ are treated separately since different techniques are required.
%there exists a universal constant $s \in (0, 1)$ such that $A$ satisfies $p$-ARP of order $L:=sm$ with high probability. 
\subsection{$\mathbf{p = 2}$}
We begin by re-stating \cite[Lemma 1]{chen2015solving} in our notation, where the conclusion of \cite[Lemma 1]{chen2015solving} is called RIP in \cite{charisopoulos2019low}.

\begin{lemma}[Assumption G $\implies$ 2-AGP(RIP)]
\cite[Lemma 1]{chen2015solving}\label{ic_lm_0}
Under Assumption G, there exists universal constants $c_0, c_1, C$ such that for $\epsilon \in (0, 1)$, if $m > c_0n\eps^{-2}\log \frac{1}{\epsilon}$, then with probability at least $1-C\exp(-c_1 \eps^2 m)$,
\begin{equation} \label{new_1}
	0.9(1-\eps)\norm{M}_F \leq \frac{1}{m}\sum_{i=1}^m|A_iMA_i^T| \leq \sqrt{2}(1+\eps)\norm{M}_F
\end{equation}
for all symmetric rank-2 matrices $M$ which implies 2-AGP with $M = xx^T - yy^T$, $\mu_1=0.9(1-\eps)$ and $\mu_2 = \sqrt{2}(1+\eps)$.
	\end{lemma}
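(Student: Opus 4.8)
Once the bound \eqref{new_1} is in hand, the claimed $2$-AGP is immediate: for $M=xx^T-yy^T$ one has $\tfrac1m\norm{|Ax|^2-|Ay|^2}_1=\tfrac1m\sum_{i=1}^m|A_iMA_i^T|$ and $\phi_2(x,y)=\norm{M}_F$ by \eqref{phi_p defined}, with $\mu_2=\sqrt2(1+\eps)<2\mu_1=1.8(1-\eps)$ for $\eps$ small. So the entire task is to prove \eqref{new_1}, which I would approach by the standard \emph{expectation / pointwise concentration / uniformization} program.

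\textbf{Expectation.} First I would fix a symmetric $M$ with $\mathrm{rank}(M)\le2$ and compute $\bE|A_iMA_i^T|$. Writing the spectral decomposition $M=\lambda_1u_1u_1^T+\lambda_2u_2u_2^T$ and using the rotational invariance of a Gaussian row $A_i\sim N(0,I_n)$, the quadratic form $A_iMA_i^T$ has the law of $\lambda_1Z_1^2+\lambda_2Z_2^2$ with $Z_1,Z_2$ i.i.d.\ $N(0,1)$, while $\norm{M}_F=(\lambda_1^2+\lambda_2^2)^{1/2}$. By positive homogeneity I may normalize $\lambda_1^2+\lambda_2^2=1$ and extremize $g(\lambda_1,\lambda_2):=\bE|\lambda_1Z_1^2+\lambda_2Z_2^2|$ on the unit circle. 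The maximum $\sqrt2$ occurs at $\lambda_1=\lambda_2$ (no cancellation, the integrand has a fixed sign), and the minimum $\tfrac{2\sqrt2}{\pi}\ge0.9$ at $\lambda_1=-\lambda_2$, using $Z_1^2-Z_2^2=2UV$ for independent standard Gaussians $U,V$, so that $\bE|Z_1^2-Z_2^2|=2(\bE|U|)^2=4/\pi$. This yields $0.9\,\norm{M}_F\le\bE|A_iMA_i^T|\le\sqrt2\,\norm{M}_F$ and identifies the constants in \eqref{new_1}.

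\textbf{Pointwise concentration.} For fixed $M$ the summands $|A_iMA_i^T|$ are i.i.d., and each $A_iMA_i^T$ is a quadratic form in a Gaussian vector, hence sub-exponential with parameter comparable to $\norm{M}_F$ by Hanson--Wright. A Bernstein inequality then gives, for $t\in(0,1)$,
\[
\bP\!\left[\Big|\tfrac1m\textstyle\sum_i|A_iMA_i^T|-\bE|A_1MA_1^T|\Big|>t\,\norm{M}_F\right]\le 2\exp(-c_1t^2m),
\]
and taking $t$ a fixed multiple of $\eps$ produces the $\exp(-c_1\eps^2m)$ rate at each fixed $M$.

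\textbf{Uniformization (the crux).} The real work is to make \eqref{new_1} hold simultaneously for every rank-2 $M$. The map $f(M):=\tfrac1m\sum_i|A_iMA_i^T|$ is a positively homogeneous seminorm, so it suffices to control it on $\mathcal S:=\{M=M^T:\mathrm{rank}(M)\le2,\ \norm{M}_F=1\}$. I would cover $\mathcal S$ by a $\delta$-net $\mathcal N$ in Frobenius norm; since a rank-2 symmetric matrix is described by two directions and two eigenvalues, $\mathcal S$ is an $O(n)$-dimensional set and $|\mathcal N|\le(C/\delta)^{O(n)}$. A union bound over $\mathcal N$ using the previous step gives the two-sided bound on $\mathcal N$ with probability $1-|\mathcal N|\,2\exp(-c_1\eps^2m)\ge1-C\exp(-\tfrac12 c_1\eps^2m)$ as soon as $m\gtrsim n\eps^{-2}\log(1/\delta)$, and taking $\delta$ proportional to $\eps$ is precisely what produces the hypothesis $m>c_0n\eps^{-2}\log\frac1\eps$. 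The main obstacle is the passage from the net to all of $\mathcal S$: a difference $M-M'$ of two rank-2 matrices has rank $\le4$, so it is \emph{not} in $\mathcal S$, and the naive Lipschitz bound $f(M-M')\le\big(\tfrac1m\sum_i\norm{A_i}^2\big)\norm{M-M'}_F$ costs a factor $\sim n$ and would ruin the dimension count. I would instead use a self-bounding/rank-doubling trick: split $M-M'$ by its spectral decomposition into two rank-$\le2$ pieces $N_1,N_2$ with $\norm{N_1}_F+\norm{N_2}_F\le\sqrt2\,\norm{M-M'}_F$, so that $f(M-M')\le(\sup_{\mathcal S}f)\cdot\sqrt2\,\delta$; this yields the self-referential inequality $\sup_{\mathcal S}f\le\sqrt2(1+\eps/2)+\sqrt2\,\delta\sup_{\mathcal S}f$, which closes for $\delta$ a small multiple of $\eps$ and gives $\sup_{\mathcal S}f\le\sqrt2(1+\eps)$, the lower bound following symmetrically once the upper bound is available. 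Controlling the heavy (only sub-exponential) tails of the $A_iMA_i^T$ inside this chaining---rather than the eigenvalue computation or the Bernstein estimate---is where essentially all of the difficulty lies.
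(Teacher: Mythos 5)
The paper does not prove \eqref{new_1} at all: it is quoted directly from \cite[Lemma 1]{chen2015solving}, and the paper's only original content is the observation---which you also make in your first sentence---that substituting $M = xx^T - yy^T$ converts \eqref{new_1} into 2-AGP with $\mu_1 = 0.9(1-\eps)$ and $\mu_2=\sqrt2(1+\eps)$. Your proposal instead reproves the cited concentration inequality from first principles, and the outline is sound: the rotational-invariance reduction to $\bE|\lambda_1 Z_1^2+\lambda_2 Z_2^2|$ pins down exactly the right constants ($\sqrt2$ as the maximum, and $2\sqrt2/\pi\approx 0.9003$ at the antisymmetric point via $Z_1^2-Z_2^2=2UV$); a sub-exponential Bernstein bound gives the pointwise $\exp(-c_1\eps^2 m)$ tail; and your rank-splitting device in the net step---decomposing the rank-$\le4$ difference $M-M'$ spectrally into two rank-$\le2$ pieces with $\norm{N_1}_F+\norm{N_2}_F\le\sqrt2\,\norm{M-M'}_F$ and closing a self-bounding inequality---correctly avoids the factor-$n$ loss of the naive Lipschitz bound and is precisely what generates the hypothesis $m\gtrsim n\eps^{-2}\log\frac{1}{\eps}$; this is in substance the standard proof of the cited result. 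The one soft spot is in the expectation step: you compute the value $2\sqrt2/\pi$ at $\lambda_1=-\lambda_2$ but do not show that this point is the \emph{minimizer} over the unit circle; the claim is true, but it needs a short additional argument (e.g., via the polar-coordinate representation of $\bE|Z_1^2-sZ_2^2|$ and a monotonicity check, in the spirit of the paper's verification in the proof of Lemma \ref{lm:new4}). What the paper's citation buys is brevity; what your reconstruction buys is a self-contained account that makes visible where the constants $0.9$ and $\sqrt2$, the $\eps^{-2}\log\frac{1}{\eps}$ sample-size requirement, and the exponential failure probability all come from.
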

	
%The following lemma shows that if $m \geq Cn$ for some $C > 0$, then $A$ satisfies 2-ARP with high probability.
\begin{lemma}[Assumption G $\implies$ 2-AGP $\implies$ 2-ARP]\label{ic_lm_main}
Under assumption G, there exist universal constants 
$c_0, c_1,C > 0, s\in (0, 1), \psi \in (0, 1)$
such that if $m > c_0 n$ and $A\in\Rmn$ satisfies G, 
then
\[
	\norm{(|Ax|^2 - |Ay|^2)_T}_1 \!\leq\! \psi \norm{(|Ax|^2 - |Ay|^2)_{T^c}}_1
		\
		\forall\, x, y \in \mathbb{R}^n\,\text{and }
		T \!\subseteq\! [m]\text{ with }|T|\! \leq \!sm
\]
with probability at least $1 - C\exp(-c_1 m)$. Consequently, 2-ARP holds for $m$ with high probability for $m$ sufficiently large.
\end{lemma}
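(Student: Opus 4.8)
The plan is to reduce the $p=2$ case of p-ARP to the statement that the largest $sm$ entries of the residual vector $w := |Ax|^2 - |Ay|^2$ cannot carry too large a fraction of its total $\ell_1$ mass, and then to prove that statement using concentration together with union bounds over both the support $T$ and the set of rank-two matrices $M = xx^T - yy^T$. Writing $M = xx^T-yy^T$, each coordinate is $w_i = A_i M A_i^T$, so $\frac1m\norm{w}_1 = \frac1m\sum_{i=1}^m |A_i M A_i^T|$ is exactly the quantity controlled by Lemma \ref{ic_lm_0}. Since $\norm{w_T}_1 \le \psi\norm{w_{T^c}}_1$ is equivalent to $\norm{w_T}_1 \le \frac{\psi}{1+\psi}\norm{w}_1$, and since for fixed $M$ the worst $T$ indexes the $sm$ largest entries of $|w|$, it suffices to establish a uniform bound
\[
\sup_{|T|\le sm}\ \sup_{M = xx^T - yy^T,\ \norm{M}_F = 1}\ \frac1m\sum_{i\in T}|A_i M A_i^T|\ \le\ \beta(s),
\]
with $\beta(s)\to 0$ as $s\to0$, holding with probability at least $1 - C\exp(-c_1 m)$. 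Indeed, combined with the 2-AGP lower bound $\frac1m\norm{w}_1 \ge 0.9(1-\eps)\norm{M}_F$ of Lemma \ref{ic_lm_0}, this gives $\norm{w_T}_1 \le \frac{\beta(s)}{0.9(1-\eps)}\norm{w}_1$; choosing $s$ so small that $\beta(s) < \frac12\cdot 0.9(1-\eps)$ then yields the conclusion with $\psi := \frac{\beta(s)}{0.9(1-\eps)-\beta(s)}\in(0,1)$.

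To prove the restricted upper bound I would first normalize $\norm{M}_F = 1$ by homogeneity (the case $M=0$, i.e.\ $x=\pm y$, being trivial). Diagonalizing $M$ shows that for fixed $M$ the variables $Z_i := |A_i M A_i^T|$ are i.i.d., being absolute values of signed combinations of two independent $\chi^2_1$ variables; hence each $Z_i$ is sub-exponential with mean and parameter of order one. For a fixed $M$ and a fixed $T$ with $|T| = sm$, Bernstein's inequality for sub-exponential sums gives $\bP(\sum_{i\in T}Z_i \ge \beta m) \le \exp(-c\beta m)$ once $\beta$ exceeds a fixed multiple of $s$. A union bound over the $\binom{m}{sm}\le \exp(sm\log(e/s))$ choices of $T$ leaves failure probability $\le \exp(sm\log(e/s) - c\beta m)$, which is controlled as soon as $c\beta > s\log(e/s)$; this forces only $\beta(s) = \Theta(s\log(1/s))\to 0$, as required. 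To move from one $M$ to all $M$, I would cover the $O(n)$-dimensional set $\{xx^T-yy^T : \norm{\cdot}_F=1\}$ by an $\eta$-net of cardinality $(C/\eta)^{Cn}$, union-bound the previous estimate over the net, and bound the oscillation of $M\mapsto \frac1m\sum_{i\in T}|A_i M A_i^T|$ by its Lipschitz constant $\frac1m\sum_{i=1}^m\norm{A_i}^2$, which concentrates near $n$.

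The final step is to balance the parameters. Taking $\eta$ a small multiple of $\beta/n$ keeps the net discretization error below $\beta$ while contributing $\exp(Cn\log(n/\beta))$ to the union bound, so the total failure probability is at most $C\exp(Cn\log(n/\beta) + sm\log(e/s) - c\beta m)$. Fixing $\eps$ (say $\eps = \frac12$) and then $\beta$, $s$, $\psi$ as universal constants, this exponent is at most $-c_1 m$ provided $m \ge c_0 n$ with $c_0$ large (the net term pushing the threshold up to $\gtrsim n\log n$). The main obstacle, and the genuine difference from the compressed-sensing implication RIP $\Rightarrow$ NSP (which is deterministic), is precisely this simultaneous uniformity over the continuum of matrices $M$ and the combinatorial family of supports $T$: one must play the net resolution — forced small by the $O(n)$-scale Lipschitz constant — against the union bound over supports while keeping the deviation exponent linear in $m$. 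This trade-off is exactly what pins down the admissible sparsity constant $s$; the logarithmic factor it introduces is the only gap from the clean threshold $m \gtrsim n$, and can be removed by a chaining refinement if desired.
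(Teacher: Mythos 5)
Your route is genuinely different from the paper's, and its core logic (reduce ARP to bounding the top-$sm$ fraction of the $\ell_1$ mass, then Bernstein plus a double union bound over supports $T$ and a net over normalized $M=xx^T-yy^T$, combined with only the \emph{lower} AGP bound) is sound. But as written it does not prove the lemma as stated: your parameter balancing only closes when $m \gtrsim n\log n$, whereas the lemma asserts a linear threshold $m > c_0 n$. The loss comes from your oscillation control in the net argument: bounding the Lipschitz constant of $M \mapsto \frac1m\sum_{i\in T}|A_iMA_i^T|$ by $\frac1m\sum_{i=1}^m\norm{A_i}^2 \approx n$ forces the net resolution $\eta \approx \beta/n$, hence a net contribution $\exp(Cn\log(n/\beta))$ to the failure exponent, which cannot be absorbed into $-c_1m$ unless $m \gtrsim n\log n$. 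You acknowledge this and defer to ``a chaining refinement,'' but that step is not carried out, and it is exactly where the remaining work lies. A cleaner repair, consistent with tools already in the paper: control the oscillation not by a pointwise Lipschitz bound but by applying the upper inequality of Lemma \ref{ic_lm_0} to the difference $M - M_0$ (split into two rank-$\le 2$ pieces), giving $\frac1m\sum_{i=1}^m |A_i(M-M_0)A_i^T| \le C\norm{M-M_0}_F$ on the same event, i.e.\ an $O(1)$ Lipschitz constant; then $\eta$ can be taken proportional to $\beta$, the net contributes only $\exp\left(Cn\log(1/\beta)\right)$, and the linear threshold is restored. This is precisely the device used in the paper's appendix in the proofs of Lemmas \ref{ic_lm_4} and \ref{lm:new4}.

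For comparison, the paper's proof avoids any net over $M$: it applies the two-sided bound of Lemma \ref{ic_lm_0} to each submatrix $A_{T^c}$ (uniformity in $x,y$ comes for free from that cited lemma), union-bounds only over the $\binom{m}{sm}$ supports, and then uses the identity $\norm{w_T}_1 = \norm{w}_1 - \norm{w_{T^c}}_1$ together with the upper bound on the full sum and the lower bound on the $T^c$ sum. That argument hinges on the arithmetic fact $\mu_2/\mu_1 = \sqrt2/0.9 < 2$ (the condition $\mu_2 < 2\mu_1$ in the AGP definition), and consequently its $\psi$ is bounded away from $0$ no matter how small $s$ and $\eps$ are taken. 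Your decomposition, once repaired, buys something the paper's does not: $\beta(s) = \Theta(s\log(1/s)) \to 0$, so $\psi$ can be made arbitrarily small, and no relation between $\mu_2$ and $2\mu_1$ is needed --- only the lower AGP bound. So the strategy is worth keeping; the concrete gap is the uncompleted $m \gtrsim n$ uniformity over $M$, and the $M - M_0$ trick closes it.
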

\begin{proof}
We first derive conditions on
	$\eps,\, s \in (0, 1)$ so that $\psi\in(0,1)$ exists.  
	To this end let $\eps,\, s \in (0, 1)$ be given.
	Let $T\subset[m]$ be any subset of $sm$ indices and denote
	by $A_{T^c}$ the $(1-s)m \times n$ sub-matrix of $A$ whose rows correspond to 
	the indices in $T^c$. With this notation, we have 
	$|A_{T^c}x|=|Ax|_{T^c}$.
	Also note that the entries of the 
	matrix $A_{T^c}$ satisfy G.
 By Lemma \ref{ic_lm_0}, there exist universal constants $c_0, c_1, C$ such that if $m > c_0 n \eps^{-2} \log \frac{1}{\epsilon}$, then, for $M = xx^T - yy^T$ and each subset $T \subseteq |m|$ with $|T| = sm$, 
\begin{equation} \label{new_2}
	0.9(1-\eps)\norm{xx^T - yy^T}_F \!\leq\! \frac{1}{(1-s)m}\norm{(|A x|^2 - |A y|^2)_{T^c}}_1 
	\!\leq\! \sqrt{2}(1+\eps)\norm{xx^T - yy^T}_F
\end{equation}
fails to hold with probability no greater than 
$C\exp(-c_1 \eps^2 (1-s)m)$, that is,
2-AGP holds for $A_{T^c}$. 
Since there are $${m \choose (1-s)m} = {m \choose sm} \leq \left(e\frac{m}{sm}\right)^{sm} = \left(\frac{e}{s}\right)^{sm}$$ such $T$'s, the event 
\[
B := \{\eqref{new_2} \text{ holds for every } T \subseteq [m] \text{ with }|T| = sm\} \cap \{\eqref{new_1} \text{ holds}\},
\]
satisfies
\begin{align*}
	\mathbb{P}(B) & \geq 1 - C(e/s)^{sm} \exp(-c_1\eps^2 (1-s)m) - C\exp(-c_1\eps^2 m) \\
	& = 1 - C\exp((1+c_1 \eps^2)sm + sm \log(\frac{1}{s}) -c_1 \eps^2 m) - C\exp(-c_1\eps^2 m).
\end{align*}
Choose $\hat s>0$ so that $(1+c_1 \eps^2)\hat s + \hat s\log(\frac{1}{\hat s}) < \frac{c_1}{2} \eps^2$.
Then, for all $s\in(0,\hat s)$,
$\mathbb{P}(B) \geq 1-2C\exp(-(c_1/2) \eps^2 m)$.
Thus, if event $B$ occurs, we have
\begin{equation}
\begin{aligned}
	\norm{(|Ax|^2 - |Ay|^2)_T}_1 &\! =\! \norm{|Ax|^2 \!-\! |Ay|^2}_1\! -\! \norm{(|Ax|^2\! -\! |Ay|^2)_{T^c}}_1 \\
	& \leq \sqrt{2}(1+\eps)m\norm{xx^T\! -\! yy^T}_F \!- \!0.9(1-\eps)(1-s)m \norm{xx^T\! -\! yy^T}_F\\
	& \leq \frac{\sqrt{2}(1+\eps) \!-\! 0.9(1-\eps)(1-s)}{0.9(1-\eps)(1-s)} \norm{(|Ax|^2 \!-\! |Ay|^2)_{T^c}}_1,
\end{aligned}
\end{equation}
where the first inequality follows from \eqref{new_1} applied to the
first term and \eqref{new_2} applied to the second, and the second inequality follows by \eqref{new_2}.
Consequently, as long as $s \in (0, \hat s)$ is chosen so that $\psi := \frac{\sqrt{2}(1+\eps) - 0.9(1-\eps)(1-s)}{0.9(1-\eps)(1-s)} < 1$, the conclusion follows. 
This can be accomplished by choosing $\epsilon$ so that $\frac{\sqrt{2}(1+\eps)}{1.8(1-\eps)} < 1$ 
(or equivalently, $0<\eps < \frac{1.8 - \sqrt{2}}{1.8 + \sqrt{2}}$) and then choosing 
$s\in(0,\, \min\{\hat s,\, 1 - \frac{\sqrt{2}(1+\eps)}{1.8(1-\eps)}\})$.
\end{proof}

\subsection{$\mathbf{p = 1}$} 
This case requires a series of four technical lemmas in order to establish the main results. We list these lemmas below, and 
their proofs are in the appendix (Section \ref{appendix}). 

	\begin{lemma} \label{ic_lm_4}
	Under assumption G, there exist universal constants $C_0, C_1, C_2$ such that for $\tilde\eps>0$ sufficiently small, if $m > C_0 n \tilde\eps^{-4} \log \tilde\eps^{-1}$, then with probability at least $1 - C_1 \exp(-C_2 \tilde\eps^4 m)$, 
	\begin{equation}\label{lc_13}
	(1 - \tilde\eps) \sqrt{\frac{2}{\pi}}\norm{h} \leq \frac{1}{m}\sum_{i=1}^m |A_i h| \leq (1+\tilde\eps)\sqrt{\frac{2}{\pi}}\norm{h}
	\quad \forall \, h \in \mathbb{R}^n.
	\end{equation}
%	holds for all non-zero vectors $h \in \mathbb{R}^n$. 
	\end{lemma}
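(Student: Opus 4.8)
The plan is to establish the two-sided bound by a concentration-plus-covering argument on the unit sphere, using that a Gaussian row $A_i$ turns $A_ih$ into a centered Gaussian. Both $\frac1m\sum_{i=1}^m|A_ih|$ and $\sqrt{2/\pi}\,\norm h$ are positively homogeneous of degree one in $h$, so after rescaling it suffices to prove
\[
\Big|\tfrac1m\textstyle\sum_{i=1}^m|A_ih|-\sqrt{2/\pi}\,\Big|\le \teps\sqrt{2/\pi}\qquad\text{for all }h\in S^{n-1}.
\]
For a fixed $h\in S^{n-1}$, the entries of $A$ being i.i.d.\ $N(0,1)$ and $\norm h=1$ force $A_1h,\dots,A_mh$ to be i.i.d.\ $N(0,1)$, so the summands $|A_ih|$ are i.i.d.\ with $\bE|A_ih|=\sqrt{2/\pi}$. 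Hence $\sqrt{2/\pi}$ is exactly the pointwise mean of $G(h):=\frac1m\sum_i|A_ih|$, and the task reduces to concentration of $G(h)$ about its mean, uniformly in $h$.

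The first substantive step is pointwise concentration. Viewing $G(h)$ as a function of the Gaussian matrix $A$ for fixed $h$, it is Lipschitz in the Frobenius norm with constant $\norm h/\sqrt m=1/\sqrt m$, since each summand $|A_ih|$ is $\norm h$-Lipschitz in $A_i$. Gaussian concentration of Lipschitz functions then gives
\[
\bP\!\left(\,\big|G(h)-\sqrt{2/\pi}\,\big|>t\,\right)\le 2\exp(-t^2m/2),
\]
and equivalently one may invoke a subgaussian Hoeffding/Bernstein bound for the i.i.d.\ average of the $|A_ih|$.

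Next I would discretize: fix an $\eta$-net $\mathcal N$ of $S^{n-1}$ with $|\mathcal N|\le(3/\eta)^n$ and union-bound the previous display over $\mathcal N$, so that $|G(h)-\sqrt{2/\pi}|\le t$ holds for every $h\in\mathcal N$ with probability at least $1-2(3/\eta)^n\exp(-t^2m/2)$. To transfer the estimate from $\mathcal N$ to the full sphere I must control the oscillation of $G$: for $h,h'\in S^{n-1}$,
\[
|G(h)-G(h')|\le\tfrac1m\textstyle\sum_i|A_i(h-h')|\le \tfrac1{\sqrt m}\norm A\,\norm{h-h'},
\]
and the standard operator-norm bound $\norm A\le C\sqrt m$ (valid with probability $1-2\exp(-cm)$ once $m\ge n$) makes $G$ an $O(1)$-Lipschitz function on a high-probability event; choosing the nearest net point then controls $G(h)$ for every $h\in S^{n-1}$.

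Finally I would balance parameters: taking $\eta$ and $t$ at the appropriate power of $\teps$ so that the net oscillation and the per-point slack together stay below $\teps\sqrt{2/\pi}$, the covering factor $(3/\eta)^n$ is absorbed by the exponential tail precisely when $m\gtrsim n\teps^{-4}\log\teps^{-1}$, and the failure probability collapses to the claimed $C_1\exp(-C_2\teps^4 m)$ (together with the $\exp(-cm)$ from the operator-norm event). The main obstacle is exactly this \emph{uniformity} over $h$: one must simultaneously (i) control the random Lipschitz constant of $G$ through $\norm A$ on the same high-probability event that carries the pointwise bounds, and (ii) choose the net fine enough that the two-sided \emph{multiplicative} constants $1\pm\teps$, rather than merely additive $O(1)$ control, survive the passage from $\mathcal N$ to $S^{n-1}$. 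It is this delicate covering-versus-tail tradeoff, not the per-point concentration, that fixes the $\teps^{-4}$ sample size and the $\teps^4$ rate in the exponent.
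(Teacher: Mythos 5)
Your proof is correct, and at the decisive step it takes a genuinely different route from the paper. Both arguments share the same skeleton: for fixed unit $h$ the variables $|A_ih|$ are i.i.d.\ sub-gaussian with mean $\sqrt{2/\pi}$, giving per-point concentration $\bP(|G(h)-\sqrt{2/\pi}|>t)\le C\exp(-cmt^2)$, followed by a union bound over a net of the sphere. The difference is how the estimate is transferred from the net to all of $S^{n-1}$. The paper bounds the oscillation by
$\frac1m\sum_i\bigl||A_ih|-|A_ih_0|\bigr|\le\bigl(\frac1m\sum_i\bigl||A_ih|^2-|A_ih_0|^2\bigr|\bigr)^{1/2}$
and then invokes Lemma \ref{ic_lm_0} (the RIP/2-AGP bound \eqref{new_1} applied to $hh^T-h_0h_0^T$), which yields only H\"older-$\tfrac12$ control: an $\eps$-net gives oscillation $O(\sqrt\eps)$. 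That square-root loss forces the reparametrization $\teps\sim\sqrt\eps$ at the end of the paper's proof, and it is the \emph{sole} source of the $\teps^{-4}$ sample size and $\teps^4$ exponent in the statement. You instead control the oscillation by $\frac1{\sqrt m}\norm{A}\,\norm{h-h_0}$ together with the standard Gaussian operator-norm bound $\norm{A}\le C\sqrt m$ (valid with probability $1-2\exp(-cm)$ when $m\gtrsim n$), which makes $G$ genuinely Lipschitz on a high-probability event. With that, taking $\eta\sim t\sim\teps$ closes the argument already when $m\gtrsim n\teps^{-2}\log\teps^{-1}$, with failure probability $C\exp(-c\teps^2m)$ — strictly stronger than the lemma, and a fortiori implying it under the stated hypothesis $m>C_0n\teps^{-4}\log\teps^{-1}$. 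One correction to your closing remark: the $\teps^{-4}$ rate is \emph{not} fixed by the covering-versus-tail tradeoff in your argument (that tradeoff lands at $\teps^{-2}$); it is an artifact of the paper's weaker oscillation control. What each route buys: yours imports one extra standard fact (operator-norm concentration) and gets a sharper $\teps$-dependence; the paper's reuses Lemma \ref{ic_lm_0}, which it needs anyway, keeping the whole proof inside a single already-established event at the cost of the worse exponents.
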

	
\begin{lemma}\label{lm:new4}
Under assumption G, there exists universal constants $\tilde{c}_0,\tilde{c}_1, \tilde{C}$ such that for $\tilde\epsilon$ sufficiently small, if $m > \tilde{c}_0n \tilde\eps^{-2}\log \frac{1}{\tilde\epsilon}$, then with probability at least $1-\tilde{C}\exp(-\tilde{c}_1 \tilde\eps^2 m)$,
	\begin{equation}\label{eq:nn8}
		\frac{1}{m}\sum_{i=1}^m \left||A_i x|^2 - |A_i y|^2\right|^{\frac{1}{2}} \geq 0.77(1 - \tilde\eps) \norm{xx^T - yy^T}^{\frac{1}{2}}_F
		\quad \forall \, x, y \in \mathbb{R}^n.
	\end{equation}
\end{lemma}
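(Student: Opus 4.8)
The plan is to reduce the claim to a single pointwise expectation estimate and then upgrade it to a statement uniform in $(x,y)$ by a covering argument. Writing $M:=xx^T-yy^T$, observe that $|A_ix|^2-|A_iy|^2=A_iMA_i^T$, so the quantity to be bounded is $\frac1m\sum_{i=1}^m|A_iMA_i^T|^{1/2}$, which is homogeneous of degree $\tfrac12$ in $M$; hence I may assume $\norm{M}_F=1$. A $2\times2$ determinant computation on $\mathrm{span}\{x,y\}$ gives $\det(M|_{\mathrm{span}\{x,y\}})=-(x_1y_2-x_2y_1)^2\le0$, so $M$ has eigenvalues $\lambda_1\ge0\ge\lambda_2$ with $\lambda_1^2+\lambda_2^2=1$, and I write $\lambda_1=\cos\theta$, $\lambda_2=-\sin\theta$ for $\theta\in[0,\pi/2]$. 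If $w_1,w_2$ are the orthonormal eigenvectors, then under Assumption G the scalars $\xi_j:=A_iw_j$ are i.i.d.\ $N(0,1)$, so $A_iMA_i^T=\cos\theta\,\xi_1^2-\sin\theta\,\xi_2^2$ and
\[
\mathbb{E}|A_iMA_i^T|^{1/2}=G(\theta):=\mathbb{E}\bigl|\cos\theta\,\xi_1^2-\sin\theta\,\xi_2^2\bigr|^{1/2}.
\]
The first step is to verify $\min_{\theta\in[0,\pi/2]}G(\theta)\ge0.77$: $G$ is a smooth one-parameter integral, equal at the endpoints to $\mathbb{E}|\xi_1|=\sqrt{2/\pi}\approx0.80$ via the moment formula $\mathbb{E}|\xi|^{s}=2^{s/2}\Gamma(\tfrac{s+1}{2})/\sqrt\pi$, and bounded below on the interior, which yields the constant $0.77$.

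Next I would establish concentration at a fixed $M$. The summands $Z_i:=|A_iMA_i^T|^{1/2}$ are i.i.d., nonnegative, and have sub-Gaussian tails: since $|\cos\theta\,\xi_1^2-\sin\theta\,\xi_2^2|\le\xi_1^2+\xi_2^2$, we get $\mathbb{P}(Z_i>t)\le\mathbb{P}(\xi_1^2+\xi_2^2>t^2)=e^{-t^2/2}$. A sub-Gaussian (Hoeffding-type) inequality then gives $\mathbb{P}\bigl(|\frac1m\sum_iZ_i-\mathbb{E}Z_i|>t\bigr)\le2\exp(-c\,mt^2)$ for a universal $c>0$, so taking $t$ proportional to $\tilde\eps$ produces a per-point failure probability $\le2\exp(-c'\tilde\eps^2m)$, matching the $\tilde\eps^2$ in the target exponent.

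Finally I would make the estimate uniform by covering. Parameterizing $M=xx^T-yy^T$ by the pair $(x,y)$ (the rank-$\le2$ variety on the Frobenius sphere, of dimension $O(n)$), I build an $\eta$-net $\mathcal N$ with $|\mathcal N|\le(C/\eta)^{Cn}$ and union-bound the per-point estimate; choosing $\eta\sim\tilde\eps^2$ and $m>\tilde{c}_0n\tilde\eps^{-2}\log\tilde\eps^{-1}$ keeps the total failure probability at $\tilde C\exp(-\tilde{c}_1\tilde\eps^2m)$. For arbitrary unit $M$ with nearest net point $M'$, the elementary bound $\bigl|\,|a|^{1/2}-|b|^{1/2}\bigr|\le|a-b|^{1/2}$ together with concavity of $\sqrt{\cdot}$ gives $\bigl|\frac1m\sum_i(|A_iMA_i^T|^{1/2}-|A_iM'A_i^T|^{1/2})\bigr|\le\bigl(\frac1m\sum_i|A_i(M-M')A_i^T|\bigr)^{1/2}$, and since $M-M'$ is symmetric of rank $\le4$ with $\norm{M-M'}_F\le\eta$, splitting it into two rank-$2$ pieces and applying the upper bound of Lemma \ref{ic_lm_0} to each shows this off-net term is $\le(C\eta)^{1/2}$ uniformly with high probability; combining with the previous two steps yields $\frac1m\sum_i|A_iMA_i^T|^{1/2}\ge G(\theta')-t-(C\eta)^{1/2}\ge0.77(1-\tilde\eps)$ after calibrating $t\sim\tilde\eps$ and $\eta\sim\tilde\eps^2$. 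I expect this off-net control to be the main obstacle: a naive Lipschitz estimate on the quadratic form picks up a spurious factor of $n$ through $\frac1m\sum_i\norm{A_i}^2\approx n$, which would force $\eta\lesssim\tilde\eps^2/n$ and inject an extra $\log n$ into the sample complexity. Avoiding it requires the dimension-free rank-$\le4$ upper bound supplied by Lemma \ref{ic_lm_0}, so that $\eta\sim\tilde\eps^2$ suffices and the stated bound $m>\tilde{c}_0n\tilde\eps^{-2}\log\tilde\eps^{-1}$ is preserved.
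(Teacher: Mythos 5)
Your proposal is correct and follows essentially the same route as the paper's own proof: reduction to a normalized rank-two matrix via rotation invariance, the numerically verified expectation lower bound $0.77$, sub-Gaussian (Hoeffding-type) concentration at each fixed $M$, and an $\eps^2$-net whose off-net error is controlled through $\bigl||a|^{1/2}-|b|^{1/2}\bigr|\le|a-b|^{1/2}$, concavity of the square root, a rank-$\le 4$ splitting into two rank-$2$ pieces, and the upper bound of Lemma~\ref{ic_lm_0}. The only notable difference is bookkeeping, and it is in your favor: by taking the net directly at Frobenius distance $\eta\sim\tilde\eps^{2}$ you avoid the square-root loss the paper incurs when it reconstructs $\gamma_0=\mathrm{sgn}(\gamma)\sqrt{1-\beta_0^{2}}$ from the net point $\beta_0$ (which degrades its final accuracy to order $\sqrt{\eps}$ and effectively yields $\tilde\eps^{4}$ exponents), so your calibration actually matches the $\tilde\eps^{2}$ exponents in the lemma as stated.
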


\begin{lemma}\label{lm:new1}
	For $x,\ y \in \mathbb{R}^n$, if $x^Ty \geq 0$ (i.e. $\norm{x - y} \leq \norm{x + y}$), then
	\begin{equation}\label{eq:nn1}
		\norm{x+y} + (\sqrt{2} - 1)\norm{x - y} \geq \norm{x} + \norm{y}
	\end{equation}	
\end{lemma}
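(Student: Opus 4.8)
The plan is to reduce the vector inequality \eqref{eq:nn1} to an elementary scalar inequality in the two quantities $s := \norm{x+y}$ and $t := \norm{x-y}$, exploiting the fact that the left-hand side of \eqref{eq:nn1} depends on $x,y$ only through $s$ and $t$. First I would bound the right-hand side in terms of the same two quantities. By the Cauchy--Schwarz (equivalently, QM--AM) inequality together with the parallelogram law,
\[
\norm{x}+\norm{y} \le \sqrt{2(\norm{x}^2+\norm{y}^2)} = \sqrt{\norm{x+y}^2+\norm{x-y}^2} = \sqrt{s^2+t^2}.
\]
This is the one genuinely lossy step, and it is where equality will force $\norm{x}=\norm{y}$.

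Next I would translate the hypothesis into the $(s,t)$ language. Since $s^2 - t^2 = \norm{x+y}^2 - \norm{x-y}^2 = 4\,x^Ty$, the assumption $x^Ty\ge 0$ is precisely the statement $s \ge t \ge 0$. Combining this with the bound above, it suffices to prove the purely scalar inequality
\[
s + (\sqrt{2}-1)\,t \ge \sqrt{s^2+t^2} \qquad \text{for all } s\ge t\ge 0.
\]
Both sides being nonnegative, I would square and simplify; the difference of the two squares collapses to $2(\sqrt{2}-1)\,t\,(s-t)$, which is nonnegative exactly because $s\ge t\ge 0$. This establishes \eqref{eq:nn1}.

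The only real insight here is the first reduction: recognizing that the left-hand side is a function of $(s,t)$ alone and that the right-hand side admits a clean $(s,t)$ upper bound, so that the vector statement becomes one-dimensional; everything after that is routine algebra. I expect the main point to verify carefully is that the constant $\sqrt{2}-1$ is sharply calibrated. The squared scalar inequality reduces to $2(\sqrt{2}-1)\,t\,(s-t)\ge0$, so equality holds iff $t=0$ (i.e.\ $x=y$) or $s=t$ (i.e.\ $x^Ty=0$); together with the Cauchy--Schwarz equality condition $\norm{x}=\norm{y}$ this pins down the tight cases. In particular, at $s=t$ the inequality requires the coefficient to be at least $\sqrt{2}-1$, confirming both sharpness and why this precise constant appears in the statement.
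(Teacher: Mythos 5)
Your proof is correct, but it takes a genuinely different route from the paper. The paper's argument is calculus-based: it normalizes $\norm{x}=1$, $\norm{y}=t\in[0,1]$, sets $\rho:=x^Ty/\norm{y}$, and studies the ratio $q(t,\rho)=\bigl(\norm{x+y}+\norm{x-y}-\norm{x}-\norm{y}\bigr)/\norm{x-y}$, showing via partial derivatives that $q$ is increasing in $\rho$ and that $q(t,0)$ is decreasing in $t$, so the minimum value $q(1,0)=2-\sqrt{2}$ is attained at a corner; rearranging gives \eqref{eq:nn1}. Your argument replaces all of this with two classical identities: the parallelogram law, which lets you dominate the right-hand side by $\sqrt{s^2+t^2}$ with $s=\norm{x+y}$, $t=\norm{x-y}$, and the observation $s^2-t^2=4x^Ty\ge 0$, after which the whole lemma collapses to the scalar inequality $s+(\sqrt{2}-1)t\ge\sqrt{s^2+t^2}$ for $s\ge t\ge 0$, verified by squaring (both sides are nonnegative, so squaring is legitimate), since the difference of squares is exactly $2(\sqrt{2}-1)\,t\,(s-t)\ge 0$. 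Your approach is shorter, avoids differentiation and the corner-minimum argument entirely, and despite the apparently lossy Cauchy--Schwarz step it still certifies sharpness of the constant: equality propagates through both steps precisely at $x=y$ and at $x^Ty=0$ with $\norm{x}=\norm{y}$, the latter being the case $\norm{x+y}=\norm{x-y}=\sqrt{2}\norm{x}$ where $\sqrt{2}-1$ is forced. What the paper's method buys instead is a direct description of the quantity $\inf q$ as a monotone function of the angle and the length ratio, which makes the location of the extremal configuration explicit; but as a proof of \eqref{eq:nn1} your reduction is cleaner and fully rigorous.
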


\begin{lemma}\label{lm:new5}
	For $x, y \in \mathbb{R}^n$, 
	\begin{equation}\label{eq:nn4}
		\sqrt{2}\norm{xx^T - yy^T}_F \geq \norm{x + y}\norm{x - y}
	\end{equation}
\end{lemma}

We first show that if the matrix $A$  satisfies Assumption G, then it satisfies $1-$AGP with high probability.

\begin{lemma}[Assumption G $\implies$ 1-AGP]\label{lm:new6}
Under assumption G, there exist universal constants $\tilde{C}_0, \tilde{C}_1, \tilde{C}_2 > 0$ such that for $\tilde\eps>0$ sufficiently small, if $m > \tilde{C}_0n\tilde\eps^{-4}\log \frac{1}{\tilde\epsilon}$, then with probability at least $1-\tilde{C}_1\exp(-\tilde{C}_2 \tilde\eps^4 m)$,
	\begin{equation}\label{eq:nn33}
%	\sqrt{\frac{2}{\pi}}(2-\sqrt{2} - \tilde\eps)
	\mu_1 \phi_1(x, y) \leq \frac{1}{m} \norm{|Ax| - |Ay|}_1 \leq %\sqrt{\frac{2}{\pi}}(1+\tilde\eps)
	\mu_2\phi_1(x, y)\quad\forall\, x,y\in\Rn,
	\end{equation}
where $\phi_1(x, y)$ is defined in \eqref{phi_p defined}, $\mu_1 =\sqrt{\frac{2}{\pi}}(2-\sqrt{2} - \tilde\eps)$ and $\mu_2 = \sqrt{\frac{2}{\pi}}(1+\tilde\eps)$. Consequently, 1-AGP holds with high probability for $m$ sufficiently large.
\end{lemma}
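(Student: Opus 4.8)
The plan is to establish the two inequalities of \eqref{eq:nn33} separately, treating the upper bound as routine and concentrating effort on the lower bound. A useful preliminary reduction is that both sides of \eqref{eq:nn33} are unchanged under $y\mapsto-y$: indeed $|A(-y)|=|Ay|$ and $\phi_1(x,-y)=\phi_1(x,y)$. Hence I may assume $x^Ty\ge0$, so that $\norm{x-y}\le\norm{x+y}$ and $\phi_1(x,y)=\norm{x-y}$. Everything rests on the elementary scalar identity $\big||A_ix|-|A_iy|\big|=\min\{|A_i(x-y)|,\,|A_i(x+y)|\}$ (valid since, for scalars, $\big||\alpha|-|\beta|\big|=\min\{|\alpha-\beta|,|\alpha+\beta|\}$), which I abbreviate by setting $a_i:=|A_i(x-y)|$ and $b_i:=|A_i(x+y)|$.

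For the upper bound the identity gives $\big||A_ix|-|A_iy|\big|\le a_i$ and $\le b_i$ for every $i$, whence $\frac1m\norm{|Ax|-|Ay|}_1\le\min\{\frac1m\sum_i a_i,\ \frac1m\sum_i b_i\}$. Applying the upper estimate of Lemma \ref{ic_lm_4} to $h=x-y$ and to $h=x+y$ bounds these two averages by $(1+\tilde\eps)\sqrt{2/\pi}\,\norm{x-y}$ and $(1+\tilde\eps)\sqrt{2/\pi}\,\norm{x+y}$, and taking the minimum yields $\frac1m\norm{|Ax|-|Ay|}_1\le(1+\tilde\eps)\sqrt{2/\pi}\,\phi_1(x,y)=\mu_2\,\phi_1(x,y)$. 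This uses only Lemma \ref{ic_lm_4}.

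For the lower bound I would start from the decomposition $\min\{a_i,b_i\}=\tfrac12(a_i+b_i)-\tfrac12|a_i-b_i|$, so that
\[
\frac1m\norm{|Ax|-|Ay|}_1=\frac1{2m}\sum_i(a_i+b_i)-\frac1{2m}\sum_i|a_i-b_i|.
\]
The first sum is bounded below, via the lower estimate of Lemma \ref{ic_lm_4}, by $\tfrac12(1-\tilde\eps)\sqrt{2/\pi}\,(\norm{x-y}+\norm{x+y})$. For the second sum I would use $\big(|A_i(x-y)|-|A_i(x+y)|\big)^2\le\big||A_i(x-y)|^2-|A_i(x+y)|^2\big|=2\,|A_i(xy^T+yx^T)A_i^T|$, which exposes the symmetric rank-two matrix $xy^T+yx^T=uu^T-vv^T$ with $u=\tfrac1{\sqrt2}(x+y)$ and $v=\tfrac1{\sqrt2}(x-y)$. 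This is precisely the object controlled by Lemma \ref{lm:new4} (and, for the upper direction, by the $p=2$ estimate of Lemma \ref{ic_lm_0} through Jensen), so the correction term is tied to $\norm{xx^T-yy^T}_F$; I would then convert that Frobenius quantity back into $\norm{x+y}\norm{x-y}$ through Lemma \ref{lm:new5} and invoke the deterministic inequality of Lemma \ref{lm:new1}, whose coefficient $\sqrt2-1$ is exactly what should produce the target constant $2-\sqrt2=1-(\sqrt2-1)$. A union bound over the $\tilde\eps$-net underlying Lemmas \ref{ic_lm_4} and \ref{lm:new4} then makes the pointwise estimate uniform in $(x,y)$ and fixes both the failure probability $1-\tilde C_1\exp(-\tilde C_2\tilde\eps^4 m)$ and the sample requirement $m>\tilde C_0 n\tilde\eps^{-4}\log\tilde\eps^{-1}$ inherited from Lemma \ref{lm:new4}.

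The hard part is the correction term $\tfrac1{2m}\sum_i|a_i-b_i|=\tfrac1m\sum_i\min\{|A_ix|,|A_iy|\}$, which must be bounded \emph{above} with a constant sharp enough to leave the factor $2-\sqrt2$ after subtraction. This is where the joint Gaussian behaviour of the pair $(A_ix,A_iy)$ becomes unavoidable: replacing $\min\{|A_ix|,|A_iy|\}$ by $\tfrac12(|A_ix|+|A_iy|)$ or by $\sqrt{|A_ix||A_iy|}$, or bounding $\sum_i\min\{a_i,b_i\}$ below by the Cauchy--Schwarz (Engel) form $(\sum_i\sqrt{a_ib_i})^2/\sum_i\max\{a_i,b_i\}$, is too lossy and falls short of $2-\sqrt2$ already on the configuration $x\perp y$ with $\norm x=\norm y$, on which the asserted bound is essentially tight. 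Thus the crux is to extract from Lemmas \ref{lm:new4}, \ref{lm:new5} and \ref{lm:new1} an upper estimate of this sign-disagreement mass that captures the second-order structure of $(A_ix,A_iy)$ and not merely the marginal first moments, and then to check that the numerical constant of Lemma \ref{lm:new4} (the factor $0.77$) is compatible with the claimed $\mu_1=\sqrt{2/\pi}(2-\sqrt2-\tilde\eps)$; this bookkeeping of sharp constants, rather than any single inequality, is the main obstacle.
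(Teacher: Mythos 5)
Your upper bound is complete and agrees with the paper's. The lower bound, however, is not a proof: you set up the decomposition $\frac{1}{m}\norm{|Ax|-|Ay|}_1=\frac{1}{2m}\sum_i(a_i+b_i)-\frac{1}{2m}\sum_i|a_i-b_i|$ and then, in your own words, identify bounding the correction term $\frac{1}{2m}\sum_i|a_i-b_i|=\frac{1}{m}\sum_i\min\{|A_ix|,|A_iy|\}$ as ``the crux'' without carrying it out. Your diagnosis of why it is hard is accurate (at $x\perp y$, $\norm{x}=\norm{y}$, both sides of \eqref{eq:nn33} equal $(2\sqrt{2}-2)\sqrt{2/\pi}\,\norm{x}$ as $\tilde\eps\to 0$, so no lossy upper bound on the correction term can survive there), but the specific route you sketch cannot be completed: Lemma \ref{lm:new4} is a \emph{lower} bound on $\frac{1}{m}\sum_i|A_iMA_i^T|^{1/2}$, which points the wrong way for your purpose, and Jensen plus Lemma \ref{ic_lm_0} applied to $M=xy^T+yx^T$ bounds the correction term only by a quantity of order $\sqrt{\norm{x}\norm{y}}$, which in the regime $\norm{x-y}\ll\norm{x}$ exceeds the main term $\approx\sqrt{2/\pi}\,\norm{x}$ and so yields a negative ``lower bound,'' far below the required $\mu_1\norm{x-y}$. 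Flagging the obstacle is not the same as overcoming it, so the argument has a genuine gap.

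The missing idea is the paper's case analysis on the ratio $\norm{x+y}/\norm{x-y}$, which resolves exactly the tension you describe by using your two candidate tools in \emph{complementary} regimes rather than trying to make one work globally. First note your decomposition is the paper's identity in disguise: since $|a_i-b_i|=2\min\{|A_ix|,|A_iy|\}=(|A_ix|+|A_iy|)-\big||A_ix|-|A_iy|\big|$, your display rearranges to $\norm{|Ax|-|Ay|}_1=\sum_i\left(|A_i(x+y)|+|A_i(x-y)|-|A_ix|-|A_iy|\right)$. When $\norm{x+y}\leq 10\norm{x-y}$ (which covers your tight configuration $x\perp y$, $\norm{x}=\norm{y}$), the paper applies Lemma \ref{ic_lm_4} to the four vectors $x\pm y$, $x$, $y$ in this exact identity and Lemma \ref{lm:new1} to eliminate $\norm{x}+\norm{y}$; the resulting $O(\eps)$ error is proportional to $\norm{x+y}\leq 10\norm{x-y}$, hence harmless, and the constant $2-\sqrt{2}$ survives intact. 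When $\norm{x+y}\geq 10\norm{x-y}$, the identity route fails (the error $\eps\norm{x+y}$ swamps $\norm{x-y}$), and the paper switches to precisely the Cauchy--Schwarz/Engel estimate you dismissed, $\sum_i\big||A_ix|-|A_iy|\big|\geq\left(\sum_i\big||A_ix|^2-|A_iy|^2\big|^{1/2}\right)^2\!\Big/\sum_i(|A_ix|+|A_iy|)$, combined with Lemmas \ref{lm:new4}, \ref{ic_lm_4}, \ref{lm:new5}, and \ref{lm:new1}. The Engel form is indeed lossy by a constant factor, but in this regime the factor $\norm{x+y}/(\norm{x+y}+(\sqrt{2}-1)\norm{x-y})\geq 10/(9+\sqrt{2})$ provides the slack that makes the lossy constant exceed $\sqrt{2/\pi}(2-\sqrt{2})$; you rejected it based on its failure at a configuration that belongs to the other regime. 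Your parting concern about compatibility of the constant $0.77$ with $\mu_1$ is legitimate and is exactly the numerical check the paper performs at the end of its Case (2), where the inequality $\frac{5\cdot 0.77^2\sqrt{\pi}(1-\eps)^2}{(\sqrt{2}+9)(1+\eps)}>0.02+\sqrt{2/\pi}(2-\sqrt{2})$ holds with little room to spare.
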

\begin{proof}
By Lemma \ref{ic_lm_4} and Lemma \ref{lm:new4}, there exist universal constant $c_0, c_1, c_2$ such that for $\epsilon$ sufficiently small, if $m > c_0 n \eps^{-4}\log \frac{1}{\epsilon}$, then with probability at least $1 - c_1 \exp(-c_2 \eps^4 m)$, \eqref{lc_13} and \eqref{eq:nn5} hold. Since we can substitute $y$ by $-y$ if
necessary, without loss of generality, we assume $\norm{x-y} \leq \norm{x+y}$. 

The right hand inequality in \eqref{eq:nn33} easily follows by \eqref{lc_13} and triangle inequality 
\[
\norm{|Ax| - |Ay|}_1 \leq \norm{A(x - y)}_1.
\]

For the left hand inequality of \eqref{eq:nn33}, we consider two cases: (1) $\norm{x - y} \leq \norm{x + y} \leq 10\norm{x - y}$, and
(2) $\norm{x+y} \geq 10\norm{x - y}$.

\begin{enumerate}
	\item[(1)] Assume $\norm{x - y} \leq \norm{x + y} \leq 10\norm{x - y}$.
		By \eqref{lc_13}, we know
	\begin{equation}
		\begin{aligned}
			& \frac{1}{m} \norm{|Ax| - |Ay|}_1  = \frac{1}{m} \sum_{i=1}^m ||A_i x| - |A_i y|| \\
			& = \frac{1}{m} \sum_{i=1}^m |A_i (x+y)| + \frac{1}{m}\sum_{i=1}^m |A_i(x - y)| - \frac{1}{m}\sum_{i=1}^m |A_i x| - \frac{1}{m}\sum_{i=1}^m |A_i y|\\
			& \geq \sqrt{\frac{2}{\pi}} \left((1-\eps)\norm{x+y} + (1-\eps)\norm{x - y} - (1+\eps)\norm{x} - (1+\eps)\norm{y} \right) \\
			& \geq \sqrt{\frac{2}{\pi}}((2-\sqrt{2} - \sqrt{2}\eps)\norm{x - y} - 2\eps\norm{x+y})\\
			& \geq \sqrt\frac{2}{\pi} (2 - \sqrt{2} - (\sqrt{2} + 20) \eps)\norm{x - y},
		\end{aligned}
	\end{equation}
	where the second equality is from $||a| - |b|| = |a + b| + |a-b| - |a| - |b|$ for $a, b \in \mathbb{R}$(since if $ab \geq 0$, then $||a| - |b|| = |a - b|$ and $|a+b| = |a| + |b|$ and if $ab < 0$, then $||a| - |b|| = |a + b|$ and $|a - b| = |a| + |b|$), the first inequality is from Lemma \ref{ic_lm_4}
	(with $=h$ successively set to $x+y,\ x-y,\, x$, and $y$), the second inequality uses Lemma \ref{lm:new1} to replace $\norm{x}+\norm{y}$, and the last inequality follows from our
	assumption that $\norm{x + y} \leq 10\norm{x - y}$.
	\item[(2)] Assume $\norm{x+y} \geq 10\norm{x - y}$. We have
	\begin{equation}
	\begin{aligned}
		\frac{1}{m}\norm{|Ax| - |Ay|}_1 & = \frac{1}{m}\sum_{i=1}^m ||A_i x| - |A_i y|| \\
		&  \geq \left(\frac{1}{m}\sum_{i=1}^m ||A_i x|^2 - |A_i y|^2|^{\frac{1}{2}} \right)^2 
		\!\!\Bigg/ \!\!
		\left( \frac{1}{m} \sum_{i = 1}^m (|A_i x| + |A_i y|)\right) \\
		& \geq \sqrt{\frac{\pi}{2}}\frac{0.77^2(1 - \eps)^2\norm{xx^T - yy^T}_F}{(1+\epsilon)(\norm{x} + \norm{y})} \\
		& \geq \frac{0.77^2\sqrt{\pi}(1 - \eps)^2 \norm{x+y}\norm{x-y}}{2(1+\eps)(\norm{x} + \norm{y})} \\
		& \geq \frac{0.77^2\sqrt{\pi}(1 - \eps)^2 \norm{x+y}\norm{x-y}}{2(1+\eps)(\norm{x+y} + (\sqrt{2}-1)\norm{x-y})} \\
		& \geq \frac{5\cdot 0.77^2\sqrt{\pi}(1 - \eps)^2}{(\sqrt{2} + 9)(1+\eps)}\norm{x-y}
	\end{aligned}
	\end{equation}
	where the first inequality is by Cauchy-Schwartz inequality
	applied to the vectors  with
	$u_i=||A_i x| - |A_i y||^{\frac{1}{2}}$ and $v_i = ||A_i x| + |A_i y||^{\frac{1}{2}}$ , the second inequality is by Lemma \ref{lm:new4} and Lemma \ref{ic_lm_4}, the third inequality is by Lemma \ref{lm:new5}, the fourth inequality is by Lemma \eqref{lm:new1} and the last inequality is by $\norm{x+y} \geq 10\norm{x-y}$. When $0<\eps < 0.01$, one can show by  direct computation that
	\[
		\frac{5\cdot 0.77^2\sqrt{\pi}(1 - \eps)^2}{(\sqrt{2} + 9)(1+\eps)} > 0.02 + \sqrt{\frac{2}{\pi}}(2-\sqrt{2}),
	\]
	and so
	\begin{equation}
		\frac{1}{m}\norm{|Ax| - |Ay|}_1 \geq \sqrt{\frac{2}{\pi}}(2-\sqrt{2})\norm{x - y} 
 	\end{equation}
\end{enumerate}
	Consequently,
	\[
		\frac{1}{m}\norm{|Ax| - |Ay|}_1 \geq \sqrt{\frac{2}{\pi}}\left(2-\sqrt{2} - (20 + \sqrt{2})\eps\right))\norm{x - y}.
	\]
By substituting $\epsilon$ with $\tilde\epsilon/ (\sqrt{2} + 20)$ and adjusting $c_0, c_1,c_2$ we arrived at the desired result.	
\end{proof}

\begin{lemma}[Assumption G $\implies$ 1-AGP $\implies$ 1-ARP]\label{ic_lm_main_2}
Under Assumption G, there exist universal constants $c_0, c_1, C > 0, s\in (0, 1), \psi \in (0, 1)$ such that if $m > c_0 n$, 
then
\[
		\norm{(|Ax| - |Ay|)_T}_1 \!\leq\! \psi \norm{(|Ax| - |Ay|)_{T^c}}_1
		\
		\forall\, x, y \in \mathbb{R}^n\,\text{and }
		T \!\subseteq\! [m]\text{ with }|T|\! \leq \!sm
\]
holds with probability at least $1 - C\exp(-c_1 m)$. Consequently, 1-ARP holds with high probability for $m$ sufficiently large.
\end{lemma}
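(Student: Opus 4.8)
The plan is to mirror the proof of Lemma \ref{ic_lm_main} line for line, substituting the $2$-AGP estimate of Lemma \ref{ic_lm_0} with the $1$-AGP estimate of Lemma \ref{lm:new6}. The central observation that makes the substitution work is that $\phi_1(x,y) = \min\{\norm{x+y}, \norm{x-y}\}$ depends only on the pair $(x,y)$ and not on which rows of $A$ are selected. Consequently, if we apply the two-sided bound \eqref{eq:nn33} to the full matrix $A$ to control $\frac{1}{m}\norm{|Ax|-|Ay|}_1$ from above by $\mu_2\phi_1(x,y)$, and simultaneously apply it to the $(1-s)m \times n$ submatrix $A_{T^c}$ to control $\frac{1}{(1-s)m}\norm{(|Ax|-|Ay|)_{T^c}}_1$ from below by $\mu_1\phi_1(x,y)$, the two estimates combine with no dependence on $x,y$ remaining.

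Concretely, I would first fix $\teps>0$ small and apply Lemma \ref{lm:new6} once to $A$, giving \eqref{eq:nn33} with failure probability at most $\tilde C_1\exp(-\tilde C_2 \teps^4 m)$. Next, for each $T \subseteq [m]$ with $|T| = sm$, the rows of $A_{T^c}$ are still i.i.d.\ standard Gaussian, so Lemma \ref{lm:new6} applied to $A_{T^c}$ (which has $(1-s)m$ rows) yields
\[
\mu_1\phi_1(x,y) \leq \tfrac{1}{(1-s)m}\norm{(|Ax|-|Ay|)_{T^c}}_1 \leq \mu_2\phi_1(x,y) \quad \forall\, x,y \in \Rn,
\]
with failure probability at most $\tilde C_1\exp(-\tilde C_2\teps^4(1-s)m)$, provided $(1-s)m > \tilde C_0 n\teps^{-4}\log\frac{1}{\teps}$, which holds once $m > c_0 n$ with $c_0$ adjusted. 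Taking a union bound over the ${m \choose sm} \leq (e/s)^{sm}$ subsets and intersecting with the full-matrix event, the good event $B$ has probability at least $1 - \tilde C_1(e/s)^{sm}\exp(-\tilde C_2\teps^4(1-s)m) - \tilde C_1\exp(-\tilde C_2\teps^4 m)$; exactly as in Lemma \ref{ic_lm_main}, choosing $s$ small enough that $(1+\tilde C_2\teps^4)s + s\log\frac{1}{s} < \frac{1}{2}\tilde C_2\teps^4$ makes this at least $1 - 2\tilde C_1\exp(-\frac{1}{2}\tilde C_2\teps^4 m)$.

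On the event $B$, the identity $\norm{(|Ax|-|Ay|)_T}_1 = \norm{|Ax|-|Ay|}_1 - \norm{(|Ax|-|Ay|)_{T^c}}_1$ together with the upper bound on the full term and the lower bound on the $T^c$ term gives
\[
\norm{(|Ax|-|Ay|)_T}_1 \leq \big(\mu_2 m - \mu_1(1-s)m\big)\phi_1(x,y) \leq \frac{\mu_2 - \mu_1(1-s)}{\mu_1(1-s)}\norm{(|Ax|-|Ay|)_{T^c}}_1,
\]
so one sets $\psi := \frac{\mu_2 - \mu_1(1-s)}{\mu_1(1-s)}$, and it remains to verify $\psi < 1$, i.e.\ $\mu_2 < 2\mu_1(1-s)$. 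With $\mu_1 = \sqrt{2/\pi}(2-\sqrt{2}-\teps)$ and $\mu_2 = \sqrt{2/\pi}(1+\teps)$ from Lemma \ref{lm:new6}, one has $\mu_2 < 2\mu_1$ precisely when $3\teps < 3 - 2\sqrt{2}$, i.e.\ for all $\teps$ below a fixed threshold; with such a $\teps$ fixed, $2\mu_1(1-s) \to 2\mu_1 > \mu_2$ as $s\to 0$, so $\psi < 1$ for $s$ chosen in the appropriate interval.

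The main obstacle is the same balancing act seen in the $p=2$ proof, but made more delicate by the weaker concentration rate in Lemma \ref{lm:new6}: the per-subset failure probability decays like $\exp(-\tilde C_2\teps^4(1-s)m)$ rather than $\exp(-c_1\eps^2(1-s)m)$, so the fourth power of $\teps$ must be carried through the union bound when selecting $s$. Since $\teps$ is fixed once and for all \emph{before} the union bound, this only rescales the admissible size of $s$ and the constants $c_1,C$, leaving the structure of the argument intact; the genuinely new input — the uniform lower bound in \eqref{eq:nn33} with the favorable constant $\mu_1 = \sqrt{2/\pi}(2-\sqrt{2}-\teps)$ satisfying $\mu_2 < 2\mu_1$ — has already been supplied by Lemma \ref{lm:new6}.
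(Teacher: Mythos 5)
Your proposal is correct and follows essentially the same route as the paper's proof: apply Lemma \ref{lm:new6} to each submatrix $A_{T^c}$ and to $A$ itself, take a union bound over the $\binom{m}{sm} \leq (e/s)^{sm}$ subsets, and use the identity $\norm{(|Ax|-|Ay|)_T}_1 = \norm{|Ax|-|Ay|}_1 - \norm{(|Ax|-|Ay|)_{T^c}}_1$ to obtain $\psi = \frac{\mu_2-\mu_1(1-s)}{\mu_1(1-s)} < 1$ for $\teps$ and $s$ small. Your write-up is in fact somewhat more careful than the paper's, which leaves the union-bound bookkeeping (and the $\teps^4$ rate) implicit with the phrase ``by taking $s$ sufficiently small.''
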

\begin{proof}
	The proof strategy is similar to Lemma \ref{ic_lm_main}. 
	Let $\phi_1(x, y)$ be as defined in \eqref{phi_p defined}.
	Again, we first derive conditions on
	$\eps,\, s \in (0, 1)$ so that $\psi\in(0,1)$ exists.  
	To this end let $\eps,\, s \in (0, 1)$ be given.
	By Lemma \ref{lm:new6}, there exist universal constants $c_0, c_1, C$ such that if $m > c_0 n \eps^{-4} \log \frac{1}{\epsilon}$, then,  for any $x, y \in \Rn$ and each subset $T \subseteq |m|$ with $|T| = sm$, the double sided inequality\begin{equation} \label{ic_7}
		\sqrt{\frac{2}{\pi}}(2-\sqrt{2} - \eps) \phi_1(x, y) \leq \frac{1}{(1-s)m} \norm{(|Ax| - |Ay|)_{T^c}}_1 \leq \sqrt{\frac{2}{\pi}}(1+\eps)\phi_1(x, y)
\end{equation}
fails to hold with probability no larger than $C\exp(-c_1 \eps^2 (1-s)m)$, that is, 1-AGP holds for $A_{T^c}$. We know for the event $B := \{\eqref{ic_7} \text{ holds for every } \text{ with }|T| = sm\} \cap \{\eqref{eq:nn33} \text{ holds}\}$,
 by taking $s$ sufficient small, there exist positive constant constant $\tilde{c}$ and $\tilde{C}$ such that $\mathbb{P}(B) \geq 1-\tilde{C}\exp(-\tilde{c} \eps^4 m)$. On the event $B$, we obtain
\begin{align}
	\norm{(|Ax| - |Ay|)_T}_1 & = \norm{|Ax| - |Ay|}_1 - \norm{(|Ax| - |Ay|)_{T^c}}_1 \nonumber \\
	& \leq \sqrt{\frac{2}{\pi}}(1+\eps)m\phi_1(x, y) \nonumber  - \sqrt{\frac{2}{\pi}}(2-\sqrt{2})(1-\eps)(1-s)m \phi_1(x, y) \nonumber \\
	& \leq \frac{(1+\eps) - (2-\sqrt{2})(1-\eps)(1-s)}{(2-\sqrt{2})(1-\eps)(1-s)} \norm{(|Ax| - |Ay|)_{T^c}}_1 \nonumber 
	\end{align}
So as long as we choose $s \in (0, 1)$ such that $\psi := \frac{(1+\eps) - (2-\sqrt{2})(1-\eps)(1-s)}{(2-\sqrt{2})(1-\eps)(1-s)} < 1$, the conclusion follows. More precisely, $0< s < 1 - \frac{1+\eps}{2(2-\sqrt{2})(1-\eps)}$ (Note $\epsilon$ must be chosen such that $\frac{1+\eps}{2(2-\sqrt{2})(1-\eps)} < 1$ in advance, which is possible since $2(2-\sqrt{2}) > 1$).

\end{proof}

%\red{include a paragraph giving some conclusions about the results you proved, and their significance. then a tie-in to the next section. e.g. you now go on to look at 
%sharpness, which is a key property for numerical algorithms. }

By combining the results of this section with those of Section \ref{sec:global} we show under Assumption G that the solutions to the $\ell_0$ optimization problem \eqref{ctri_l00} and $\ell_1$ optimization problem \eqref{ic_0} coincide with high probability when the residuals are sufficiently sparse. Methods for solving \eqref{ic_0} often require that the objective function $f_p$ satisfies a sharpness condition. In the next section, we consider this sharpness condition.

\section{Sharpness} % and Algorithms}
In this section we show that, under assumption G, if $|Ax_*|^p-b$ is
sufficiently sparse, then the function
\[
f_p(x):=\frac{1}{m}\norm{\,|Ax|^p-b}_1
\]
is sharp with respect to the solution set $\{x_*,-x_*\}$ with high probability, for $p=1,2$.
%
%
%discuss about the implications of the previous results. In this section we only consider problem \eqref{pr2}.
%Let's recall the definition of weak convexity and sharpness (stability in \cite{duchi2017solving}) from \cite{davis2017nonsmooth}.
Sharpness is an extremely useful tool for analyzing the convergence and the rate of convergence of optimization algorithms
\cite{burke2002weak, burke2005weak, burke2009weak, burke1993weak, charisopoulos2019low, davis2017nonsmooth, duchi2017solving}. 
%\red{might be nice to say what's so convenient about it. e.g. `In particular, for relevant problems rates improve from ** to ** for sharp problems'.}

\begin{definition}\cite{burke1993weak}
Let $f: \mathbb{R}^n \rightarrow \mathbb{R}$ and set
$\mathcal{X}:=\argmin f$. Then $f$ is said to be sharp with respect to $\mathcal{X}$ if
 %$f$ is said to satisfy the following properties if
 \[
 f(x)\ge \min_x f \ +\ \mu \text{dist}(x. \mathcal{X})\quad \forall 
 x\in\Rn,
 \]
 where $\text{dist}(x. \mathcal{X}):=\inf_{y \in \mathcal{X}} \norm{x - y}$.
% $f(x) - \min_x f \geq \mu \text{dist}(x. \mathcal{X})$ for some $\mu > 0$.
\end{definition}

\begin{theorem}\label{g_thm1}
	Let Assumption G hold and let $p\in\{1,2\}$.
	Then there exist constants $C_p, c_{p0}, c_{p1}> 0$ and 
	$s_p\in (0,1)$, such that if $\norm{|Ax_*|^p - b}_0 \leq s_p m$, %where $s_p$ is the sparsity constant in Lemma \ref{ic_lm_main} for $p=2$ or Lemma \ref{ic_lm_main_2} for $p = 1$, 
	then, for 
	$m \geq c_{p0} n$, $f_p$ is sharp with probability at least $1-C_p\exp(-c_{p1}m)$.
\end{theorem}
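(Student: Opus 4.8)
The plan is to chain together three facts that already hold with high probability under Assumption G: the reverse-triangle inequality produced inside Theorem \ref{g_thm}, the lower bound of $p$-AGP, and---for $p=2$ only---a conversion of the lifted quantity $\phi_2$ into the genuine distance to $\mathcal X:=\argmin f_p$.

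First I would fix the high-probability event. For $m\ge c_{p0}n$, the event $B$ constructed in the proof of Lemma \ref{ic_lm_main} (case $p=2$) or Lemma \ref{ic_lm_main_2} (case $p=1$) has probability at least $1-C_p\exp(-c_{p1}m)$ and simultaneously guarantees that $A$ satisfies $p$-ARP of order $L=s_pm$ for some $\psi\in(0,1)$ and that the full-matrix lower bound of $p$-AGP,
$$\frac1m\norm{|Ax|^p-|Ax_*|^p}_1\ \ge\ \mu_1\,\phi_p(x,x_*),$$
holds for every $x\in\Rn$. I work on $B$ from here on. Since $|Ax_*|^p-b$ is $s_pm$-sparse we have $\sigma_L^p(x_*)=0$, so Theorem \ref{g_thm} gives $\mathcal X=\{x_*,-x_*\}$, whence $\mathrm{dist}(x,\mathcal X)=\min\{\norm{x-x_*},\norm{x+x_*}\}$, and its inequality \eqref{s} combined with the $p$-AGP bound yields
$$f_p(x)-\min f_p\ \ge\ \frac{1-\psi}{1+\psi}\cdot\frac1m\norm{|Ax|^p-|Ax_*|^p}_1\ \ge\ \frac{(1-\psi)\mu_1}{1+\psi}\,\phi_p(x,x_*).$$

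For $p=1$ this finishes the proof, because $\phi_1(x,x_*)=\min\{\norm{x+x_*},\norm{x-x_*}\}=\mathrm{dist}(x,\mathcal X)$ by \eqref{phi_p defined}, giving sharpness with $\mu=\tfrac{(1-\psi)\mu_1}{1+\psi}$. For $p=2$ the remaining work is to bound $\phi_2(x,x_*)=\norm{xx^T-x_*x_*^T}_F$ below by a multiple of $\mathrm{dist}(x,\mathcal X)$. Replacing $x_*$ by $-x_*$ if necessary, I may assume $x^Tx_*\ge0$, so that $\mathrm{dist}(x,\mathcal X)=\norm{x-x_*}$ and, crucially, $\norm{x+x_*}^2=\norm{x}^2+2x^Tx_*+\norm{x_*}^2\ge\norm{x_*}^2$. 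Lemma \ref{lm:new5} then gives
$$\sqrt2\,\norm{xx^T-x_*x_*^T}_F\ \ge\ \norm{x+x_*}\,\norm{x-x_*}\ \ge\ \norm{x_*}\,\mathrm{dist}(x,\mathcal X),$$
so $\phi_2(x,x_*)\ge\tfrac{\norm{x_*}}{\sqrt2}\mathrm{dist}(x,\mathcal X)$ and sharpness follows with $\mu=\tfrac{(1-\psi)\mu_1\norm{x_*}}{\sqrt2\,(1+\psi)}$.

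I expect the $p=2$ step to be the crux. Lemma \ref{lm:new5} on its own only produces $\phi_2\gtrsim\mathrm{dist}^2$, which is too weak near $x_*$, where a \emph{linear} lower bound is needed; the inequality $\norm{x+x_*}\ge\norm{x_*}$, valid precisely because we arranged $x^Tx_*\ge0$, is what restores the correct linear scaling. I would also note that the sharpness modulus $\mu$ is permitted to depend on $\norm{x_*}>0$ (the signal being nonzero), and that all probabilistic content is absorbed into the single event $B$, so no union bound beyond the one already present in Lemmas \ref{ic_lm_main} and \ref{ic_lm_main_2} is required.
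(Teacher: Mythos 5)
Your proposal is correct and follows essentially the same route as the paper: inequality \eqref{s} from the $p$-ARP lemmas, the $p$-AGP lower bound to pass to $\phi_p(x,x_*)$, the identity $\phi_1=\mathrm{dist}(\cdot,\{x_*,-x_*\})$ for $p=1$, and Lemma \ref{lm:new5} together with the observation that the larger of $\norm{x\pm x_*}$ is at least $\norm{x_*}$ for $p=2$. The only (cosmetic) differences are that you absorb all probabilistic content into the single event $B$ already built in Lemmas \ref{ic_lm_main} and \ref{ic_lm_main_2}, where the paper re-invokes Lemmas \ref{ic_lm_0} and \ref{lm:new6} separately, and that you normalize signs via $x^Tx_*\ge 0$ where the paper uses the equivalent $\min\cdot\max$ factorization.
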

\begin{proof}
Let $C_p, c_{p0}, c_{p1}> 0$ and 
	$s_p\in (0,1)$ be as in Lemma \ref{ic_lm_main} for $p=2$ and
	as in Lemma \ref{ic_lm_main_2} for $p = 1$.
%	The proof is straightforward. 
	By either Lemma \ref{ic_lm_main} ($p=2$) or
	Lemma \ref{ic_lm_main_2} ($p=1$), $A$ satisfies $p$-ARP of order $s_p m$ for $\psi_p \in (0, 1)$ for $p = 1, 2$, where $s_p$ and $\psi_p$ are constants depending on $p$. 
	Hence, by \eqref{s},
	\begin{equation}\label{eq:sharp1}
		f_p(x) - f_p(x_*) \geq \frac{1-\psi_p}{m(1+\psi_p)} \norm{|Ax|^2 - |Ax_*|^2}_1
	\end{equation}
	For $p = 2$, Lemma \ref{ic_lm_0} tells us that if $m \geq c_{p0} \eps^{-2}\log (\frac{1}{\epsilon})n$, then, with probability at least $1- C_p\exp\left( c_{p1} \eps^{-2}\log \frac{1}{\epsilon}m\right)$, 
	\begin{align}
		\frac{1}{m} \norm{|Ax|^2 - |Ax_*|^2}_1 & \geq 0.9 (1-\eps) \norm{xx^T - yy^T}_F \nonumber \\
		& \geq 0.45\sqrt{2}(1-\eps)\norm{x+x_*}\norm{x-x_*} \nonumber \\
		& = 0.45\sqrt{2}(1-\eps)
		\phi_1(x,x_*) %\min\{\norm{x-x_*}, \norm{x+x_*}\} 
		\max\{\norm{x-x_*}, \norm{x+x_*}\} \nonumber \\
		& \geq 0.45\sqrt{2}(1-\eps)\norm{x_*} \text{dist}(x, \{x_*, -x_*\}), 
	\end{align}
	where $\phi_1(x,x_*)$ is defined in \eqref{phi_p defined}.
	For $p = 1$, Lemma \ref{lm:new6} tells us that, if $m \geq c_{p0} \eps^{-4}\log (\frac{1}{\epsilon})n$, then, with probability at least $1- C_p\exp\left( c_{p1} \eps^{-4}\log \frac{1}{\epsilon}m\right)$,
	\[
		\frac{1}{m}\norm{|Ax| - |Ax_*|}_1 \geq \sqrt{\frac{2}{\pi}}(2-\sqrt{2} - \eps) \text{dist}(x, \{x_*, -x_*\}) .
	\]
	Thus, in either case, by taking an $0 < \epsilon < 1$ small enough and using \eqref{eq:sharp1}, there is constant $\mu>0$ such that
	\[
		f_p(x) - f_p(x_*) \geq \mu \text{dist}(x, \mathcal{X}), 
	\]
	 where $\mathcal{X}$ is $\argmin f_p$.  
\end{proof}

It is shown in \cite{charisopoulos2019low, duchi2017solving} that if $f_2$ is sharp and weakly convex at $\argmin f_2$, then prox-linear method and subgradient descent method with geometrically decreasing stepsize  converges locally quadratically and locally linearly, respectively. Since weak convexity of $f_2$ under assumption G is already shown in $\cite{duchi2017solving, charisopoulos2019low, davis2017nonsmooth}$, sharpness in this regime guarantees these two algorithms converge with the specified rate. In both algorithms proper initialization is needed (e.g., Section 5 of \cite{zhang2016provable}).

\section{Concluding Remarks}
There are a number of recent results discussing the nature of the solution set to the robust phase retrieval problem $\min_x f_2(x)$ with sparse noise under weaker distributional hypothesis than employed here \cite{charisopoulos2019low, duchi2017solving, zhang2016provable, chen2017robust}. The focus of these works are algorithmic. Their goal is to show their methods are robust to outliers, and, in addition, some establish the sharpness of $f_2$  in order to prove rates of convergence \cite{charisopoulos2019low, duchi2017solving}. 
Although these works use weaker distributional hypothesis, the probability of successful recovery is an average over all possible subsets $T \subseteq [m]$ with $|T| = sm$ for some $s \in (0, \frac{1}{2})$. Consequently, the value of $s$ in their results is larger than ours. The reason for this difference is that, in our result, successful recovery is valid for all possible subsets $T \subseteq [m]$ with $|T| = sm$ for some $s \in (0, 1)$, with uniformly high probability. A more precise description is this difference follows. 

In \cite{charisopoulos2019low, duchi2017solving}, the random matrix $A$ and the random index set $T \subseteq [m]$, with $|T| = sm$ for $s \in (0, \frac{1}{2})$, are drawn independently of each other. Let $w \in \{0, 1\}^m$ denote the random indicator vector of $T$, that is, $w_i = 1$ if $i \in T$ and $w_i = 0$ otherwise. Let $z \in \mathbb{R}^m$ be an arbitrary vector. The noisy model in \cite{charisopoulos2019low, duchi2017solving} has the form
\[
	\min_x \tilde{f}_2(x) :=  \norm{|Ax|^2 - (\vec{1} - w) \odot b - w \odot z}_1,
\]
where $b = |Ax_*|^2$, $\vec{1}$ represents the vector with $1$ in each entry and $\odot$ represents the elementwise product of vectors. The authors in \cite{charisopoulos2019low, duchi2017solving} prove sharpness of $\tilde{f}_2$ with respect to $x_*$ with high probability. Due to the independence of $A$ and $T$, in fact, they show that the probabillity
\[
	\mathbb{P}(\tilde{f}_2 \text{ is sharp}) = \frac{1}{{m\choose sm}}\sum_{T_0:|T_0| = sm} \mathbb{P}(\tilde{f}^{T_0}_2 \text{ is sharp})
\]
is high, where $\tilde{f}^{T_0}_2(x) := \norm{|Ax|^2 - (\vec{1} - w_0) \odot b - w_0 \odot z}_1$ and $w_0$ is the indicator vector for a \emph{fixed} index set $T_0$. On the other hand, we show that with high probability, $\tilde{f}^{T_0}_2$ is sharp for \emph{all} possible $T_0$ with $|T_0| = sm$. Our result is a stronger implication, however, it comes at the expense of a smaller value for $s$. By design, this result closely parallels the result in \cite{candes2005decoding} for compressed sensing.

\section{Appendix}\label{appendix}
In this appendix we provide the proofs for Lemmas \ref{ic_lm_4}, \ref{lm:new4}, \ref{lm:new1}, 
and \ref{lm:new5}.
These proofs make use of a 
Hoeffding-type inequality \cite{vershynin2010introduction} explained below.
A random variable $X$ is said to be sub-gaussian
\cite[Definition 5.7]{vershynin2010introduction} if 
\begin{equation}\label{eq:define psi}
\norm{X}_{\psi_2}:=\sup_{p \geq 1} p^{-1/2}(\mathbb{E}|X|^p)^{1/p}
\end{equation}
 is finite, and is said to be
centered if it has zero expectation.
By \cite[Proposition 5.10]{vershynin2010introduction},
there is a universal constant $c>0$ such that
if $X_1,..., X_N$ are independent centered sub-gaussian random variables, %that is, $$K := \max_i \norm{X_i}_{\psi_2} := \sup_{p \geq 1} p^{-1/2}(\mathbb{E}|X|^p)^{1/p}\quad i=1,\dots,N$$ is finite, 
then, for every 
$a = \{a_1, ..., a_N\} \in \mathbb{R}^N$ and $t \geq 0$, we have
	\begin{equation}\label{ic_lm_3}
	\mathbb{P}\left(|\sum_{i=1}^N a_i X_i| \geq t\right) \leq e \cdot \exp(-\frac{ct^2}{K^2\norm{a}^2}),
	\end{equation}
	where $K := \max_i \norm{X_i}_{\psi_2}$.
	\medskip
%	This gives the following bound which plays an essential role in our analysis. 
%	In particular,  
%	inequality \eqref{lc_13} is applied simultaneously to several 
% different vectors $h$ since the bound holds for all $h$
%	on a single event with high probability.

%\begin{proof}
\noindent
{\bf Proof of Lemma \ref{ic_lm_4}}:
First observe that the inequality \eqref{lc_13} is trivially true for $h=0$.
Next, let $h\in\Rn\setminus\{0\}$ and $0 < \eps < \sqrt{2} - 1$.  Observe that $\frac{|A_i h|}{\norm{h}}$ are independent sub-gaussian random variables with mean $\sqrt{\frac{2}{\pi}}$. 
	Therefore, $\frac{|A_ih|}{\norm{h}} - \sqrt{\frac{2}{\pi}}$ is a centered sub-gaussian random variable. Hence, \eqref{ic_lm_3} tells us that 
	there are universal constants $C>0$ and $c_0>0$ such that
	\begin{equation}\label{eq:each h}
	\mathbb{P}\left(\left|\sum_{i=1}^m \left(\frac{|A_i h|}{\norm{h}} - \sqrt{\frac{2}{\pi}}\right)\right| > m\sqrt{\frac{2}{\pi}} \eps\right) \leq C\exp(-c_0m\eps^2).
	\end{equation}
Therefore \eqref{lc_13} holds for each fixed $h\in \Rn\setminus\{0\}$
with probability $1 - C\exp(-c_0m\eps^2)$. 
We now show that there exist a universal event with large probability, in which \eqref{lc_13} holds for every $h$. 
On the unit sphere $S:=\{x| \norm{x} = 1\}$ construct an $\eps$-net $\mathcal{N}_{\eps}$ with $|\mathcal{N}_{\epsilon}| \leq (1+\frac{2}{\epsilon})^n$
\cite[Lemma 5.2]{vershynin2010introduction}, i.e., for any $h \in S$, 
there exists $h_0 \in \mathcal{N}_{\eps} \subseteq S$ 
such that $\norm{h - h_0} \leq \epsilon$. 
Taking the probability of the union of the events in 
\eqref{eq:each h}
for all the points $h_0\in \mathcal{N}_{\eps}$, we 
obtain the bound $C(1+\frac{2}{\epsilon})^n \exp(-c_0 m\eps^2)$.
Hence,
 \eqref{lc_13} holds for each $h_0 \in \mathcal{N}_{\eps}$ with probability at least $1-C(1+\frac{2}{\epsilon})^n \exp(-c_0 m\eps^2)$. On the intersection of these events and the event of Lemma \ref{ic_lm_0}, we deduce, for any $h$ with $\norm{h} = 1$,
 \begin{equation}
		\begin{aligned}
		\frac{1}{m}|\sum_{i=1}^m |A_i h| - \sum_{i=1}^m |A_i h_0|| & \leq \frac{1}{m}\sum_{i=1}^m ||A_i h| - |A_i h_0||  \\
		& \leq 	\frac{1}{m}\sum_{i=1}^m ||A_i h|^2 - |A_i h_0|^2|^{\frac{1}{2}} \\
		& \leq (\frac{1}{m} \sum_{i=1}^m ||A_i h|^2 - |A_i h_0|^2|)^{\frac{1}{2}}  \\
		& \leq 2^{1/4}(1+\eps)^{1/2} \norm{hh^T - h_0h_0^T}_F^{\frac{1}{2}}  \\
		& \leq 2^{1/4}(1+\eps)^{1/2}(\norm{h - h_0}\norm{h} + \norm{h - h_0}\norm{h_0})^{\frac{1}{2}} \\
		& \leq 2^{5/4}\epsilon^{1/2}, 
		\end{aligned}
\end{equation}
		where the second inequality follows since $||a| - |b||^2 \leq(|a|+|b|)||a|-|b||)$, the third from the concavity of $(\cdot)^2$, the fourth is by Lemma \ref{ic_lm_0}, the fifth is by triangle inequality and the last inequality is from $\norm{h} = \norm{h_0} = 1$ and $\norm{h - h_0} \leq \eps$.
		Hence
		\[
		(1 - \eps - 2^{3/4}\sqrt{\pi \epsilon})\sqrt{\frac{2}{\pi}} \leq \frac{1}{m} \sum_{i=1}^m |A_i h| \leq (1 + \eps + 2^{3/4}\sqrt{\pi \epsilon})\sqrt{\frac{2}{\pi}}
		\]
		holds for all $\norm{h} = 1$ with probability at least $1 - (1+\frac{2}{\epsilon})^n\exp(-c_0 m \eps^2) - c_2\exp(-c_3 m \eps^2)$, for $m \geq c_1 n \epsilon^{-2} \log(\frac{1}{\epsilon})$. For $c_1 > 0$ sufficiently large and $\epsilon$ small, the probability is at least 
		\begin{equation}\label{eq:change epsilon}
		\begin{aligned}
		& 1 - c_2 \exp(-c_3 m \eps^2) - \exp(-c_0 m \eps^2 + 2n\log(\frac{1}{\epsilon})) \\
		& \geq 1-c_2 \exp(-c_3 m \eps^2) - \exp(-(c_0 - \frac{2}{c_1})m\eps^2) \\
		& \geq 1-\tilde{c}_2\exp(-\tilde{c}_3 m \eps^2),		
		\end{aligned}
		\end{equation}
for some $\tilde{c}_2,\, \tilde{c}_3>0$. 
		 By letting $\tilde{\epsilon} = \eps + 2^{3/4}\sqrt{\pi \epsilon}<(1+2^{3/4}\sqrt{\pi})\sqrt{\eps}$ 
		 so that $\eps \ge k\tilde \eps^2$ for $k>0$, we arrive at the desired result. 
\hfill $\square$
\medskip

\noindent
{\bf Proof of Lemma \ref{lm:new4}}:
We only need to prove 
\begin{equation}\label{eq:nn5}
	\frac{1}{m}\sum_{i=1}^m \left|A_iMA_i^T\right|^{\frac{1}{2}} \geq 0.77(1-\eps)\norm{M}^{\frac{1}{2}}_F
\end{equation}
holds for all rank-2 matrix $M$ with high probability. Clearly this inequality holds when $M = 0$. Assume $M \neq 0$. Furthermore, since we can divide \eqref{eq:nn5} by $\norm{M}^{\frac{1}{2}}$ on both sides, we can assume $\norm{M} = 1$. Moreover, using the eigenvalue decomposition of $M$, we can assume that $M = z_1z_1^T - sz_2z_2^T$ where $z_1^Tz_2 = 0$, $\norm{z_1} = \norm{z_2} = 1$ and $s \in [-1, 1]$. Since for each $i$, 
$A_i z_1$ and $A_i z_2$ are independent standard gaussians, 
\begin{equation}\label{eq:ama}
\left|A_i MA_i^T\right|^{\frac{1}{2}} = \left|(A_i z_1)^2 - s(A_i z_2)^2\right|^{\frac{1}{2}} \leq \left((A_i z_1)^2 + (A_i z_2)^2\right)^{\frac{1}{2}} \leq |A_iz_1| + |A_i z_2|
\end{equation} 

are sub-gaussian. Set $e(s) := \mathbb{E}\left|A_i M A_i^T\right|^{\frac{1}{2}} = \mathbb{E}\left|Z_1^2 - s Z_2^2\right|^{\frac{1}{2}}$ where $Z_1$ and $Z_2$ are independent standard gaussian scaler random variables.
Notice $\norm{M}_F = \norm{z_1z_1^T - s z_2z_2^T}_F = \sqrt{1+s^2}$ and 
	\begin{equation}
		\begin{aligned}
			e(s) = \mathbb{E}\left|Z_1^2 - sZ_2^2\right|^{\frac{1}{2}} & = \frac{1}{2\pi}\int_0^\infty r^2e^{-\frac{r^2}{2}}dr\int_0^{2\pi}\left|\cos^2\theta - s \sin^2 \theta\right|^{-\frac{1}{2}}d\theta \\
			& = \frac{1}{2\sqrt{2\pi}}\int_0^{2\pi}\left|\cos^2\theta - s \sin^2 \theta\right|^{\frac{1}{2}}d\theta
		\end{aligned}
	\end{equation}
	We draw a plot of $\frac{e(s)}{\norm{M}_F} = \int_0^{2\pi} \left|\cos^2\theta - s \sin^2 \theta\right|^{\frac{1}{2}}d\theta /(2\sqrt{2\pi(1+s^2)})$ when $s \in [-1, 1]$ through a numerical experiment.
	\begin{figure}[H]
	\centering
	\includegraphics[width = 0.8\textwidth]{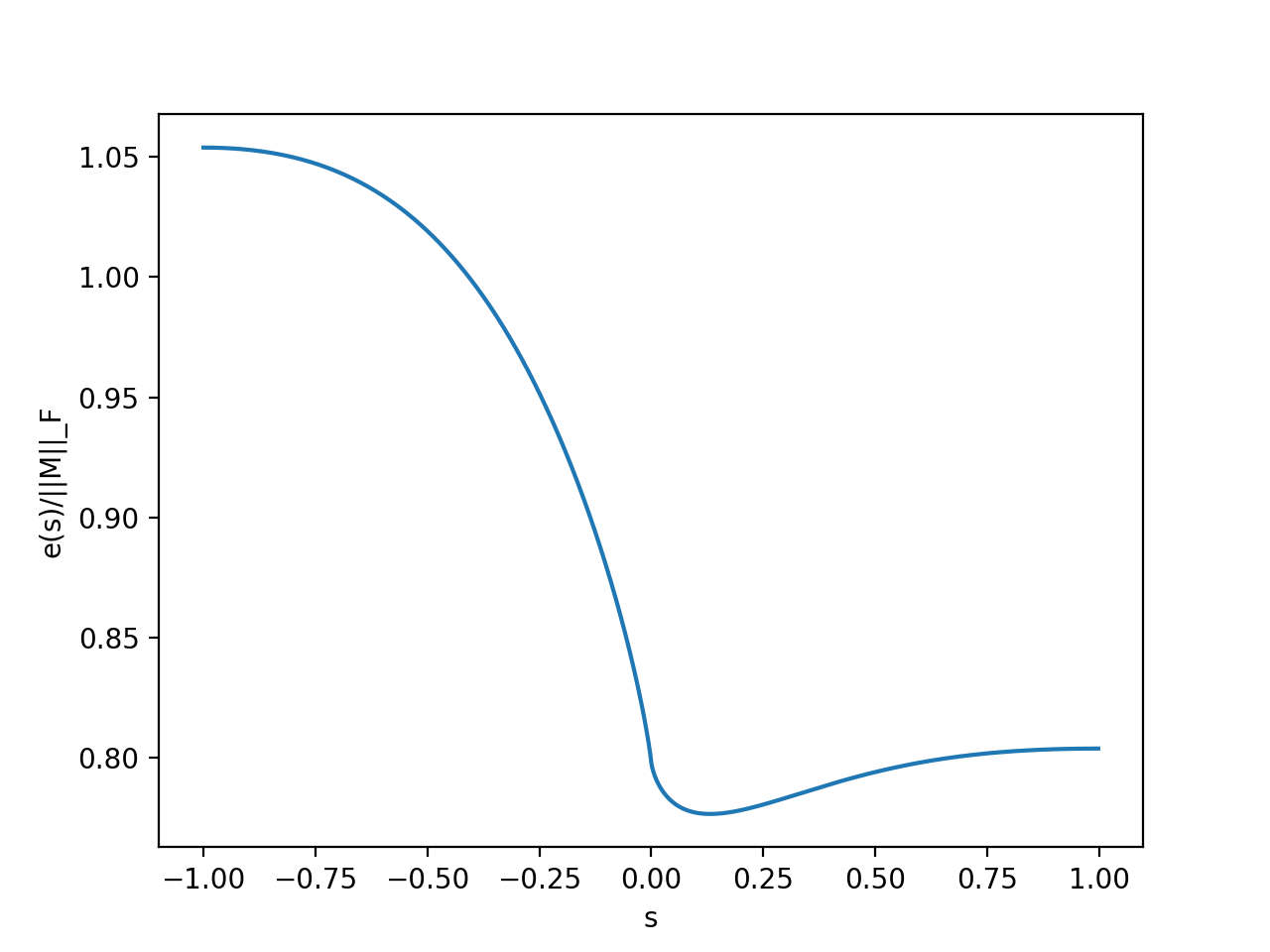}
	\caption{Values of $\frac{e(s)}{\norm{M}}_F$ when $s \in [-1, 1]$.} 
\end{figure}
Numerical experiment above shows that $\frac{e(s)}{\norm{M}_F} \geq 0.77$ (hence $e(s) \geq 0.77$) for all $a \in [-1, 1]$. Note that for each $i$, $ Y_i := \frac{\left|A_i M A_i^T \right|^{\frac{1}{2}}}{e(s)} - 1$ is a centered sub-gaussian random variable. Hence, by \eqref{eq:define psi} and \eqref{eq:ama},
\[
	\norm{Y_i}_{\psi_2} \leq \sup_{p \geq 1}p^{-\frac{1}{2}}(\frac{2(\mathbb{E}|Z|^p)^{\frac{1}{p}}}{e(s)} + 1) \leq \frac{2}{0.77}\norm{Z}_{\psi_2} + 1 < +\infty.
\] 
where $Z$ is a standard gaussian variable.
Hence, \eqref{ic_lm_3} tells us that there exist universal constants $C>0$ and $c_0>0$ such that
	\begin{equation}\label{eq:each s1}
	\mathbb{P}\left(\left|\sum_{i=1}^m \left(\frac{\left|A_i M A_i^T\right|^{\frac{1}{2}}}{e(s)} - 1\right)\right| > m \eps\right) \leq \hat{C}\exp(-\hat{c}_0m\eps^2)
	\end{equation}
Consequently, for fixed $M$,
	\begin{equation}\label{eq:each s2}
		\frac{1}{m}\sum_{i=1}^m \left|A_i M A_i^T \right|^{\frac{1}{2}} \geq (1-\eps)e(s) \geq 0.77(1-\eps)\norm{M}_F
	\end{equation}
	holds with probability at least $1-\hat{C}\exp(-\hat{c}_0 m \eps^2)$. 
	
Next we generalize \eqref{eq:each s2} to all rank-2 matrices $M$. Again, by scale invariance, we assume $\norm{M}_F = 1$. Consequently, we only need to prove \eqref{lm:new4} holds with high probability for all $M \in \mathcal{M} := \{\beta uu^T + \gamma vv^T| \norm{u} = \norm{v} = 1$, $u^T v = 0$ \text{ and } $\beta^2 + \gamma^2 = 1\}$. Set $\mathcal{S}_{\epsilon^2} := \mathcal{T}_{\epsilon^2} \times \mathcal{N}_{\epsilon^2} \times \mathcal{N}_{\epsilon^2}$ where $\mathcal{T}_{\epsilon^2}$ is an $\epsilon^2$-net of $[-1, 1]$ and $\mathcal{N}_{\eps^2}$ is an $\epsilon^2$-net of the unit sphere $\{x \in \mathbb{R}^n|\norm{x} = 1\}$. Since $|\mathcal{T}_{\eps^2}| \leq \frac{2}{\epsilon^2}$ and $|\mathcal{N}_{\epsilon^2}| \leq \left(\frac{3}{\epsilon^2} \right)^n$, we know $|\mathcal{S}_{\epsilon^2}| \leq \left(\frac{3}{\epsilon} \right)^{4n+2}$. Let $E$ denote the event that  \eqref{eq:each s2} holds for every $(\beta_0, u_0. v_0) \in \mathcal{S}_{\epsilon^2}$. Consequently,

	\centerline{
		$\mathbb{P}(E) \geq 1 - 2\hat{C}\left(\frac{3}{\epsilon}\right)^{4n+2}\exp(-\hat{c}_0 m \eps^2).$
	}
	
	\noindent
	For $M \in \mathcal{M}$, we want to approximate $M = \beta uu^T + \gamma vv^T$ by an element $M_0 = \beta_0 u_0u_0^T + \gamma_0 v_0v_0^T \in \mathcal{M}$ with $(\beta_0, u_0, v_0) \in \mathcal{S}_{\epsilon^2}$. More precisely, let $(\beta_0, u_0, v_0) \in \mathcal{S}_{\epsilon^2}$ and $M_0 = \beta_0 u_0 u_0^T + \sgn(\gamma)\sqrt{1-\beta_0^2} v_0 v_0^T$ be such that $|\beta - \beta_0| \leq \epsilon^2$, $\norm{u - u_0} \leq \epsilon^2$ and $\norm{v - v_0} \leq \epsilon^2$. Consequently, we have 
	
	\centerline{
		$|\gamma - \sgn(\gamma) \sqrt{1-\beta_0^2}| = |\sqrt{1-\beta^2} - \sqrt{1-\beta_0^2}| \leq \left|\beta^2 - \beta_0^2\right|^{\frac{1}{2}} \leq \sqrt{2}\left|\beta-\beta_0\right|^{\frac{1}{2}} \leq \sqrt{2}\epsilon.
	$}
	Also note that
	\begin{equation}\label{eq:nn6}
	\begin{aligned}
		\norm{\beta uu^T - \beta_0 u_0 u_0^T}_F & \leq |\beta - \beta_0| \norm{uu^T}_F + \norm{\beta_0u(u - u_0)^T}_F + \norm{\beta_0 (u - u_0)u_0^T}_F \\
		& = |\beta - \beta_0|\norm{u}^2 + |\beta_0|\norm{u - u_0}(\norm{u}+\norm{u_0}) \\
		& \leq 3\epsilon^2 < 4\epsilon
	\end{aligned}
	\end{equation}
	Similarly we can prove $\norm{\gamma vv^T - \sgn(\gamma)\sqrt{1-\beta_0^2} v_0v_0^T} \leq 2\eps^2 + 2\epsilon < 4\epsilon$. On the intersection of events where \eqref{new_1} holds and $E$, we have
	\begin{equation*}
		\begin{aligned}
			& \left|\frac{1}{m} \sum_{i=1}^m |A_i M A_i^T|^{\frac{1}{2}}  - \frac{1}{m} \sum_{i=1}^m |A_i M_0 A_i^T|^{\frac{1}{2}} \right|
			  \leq \frac{1}{m}\sum_{i=1}^m \left| \left|A_i M A_i^T\right|^{\frac{1}{2}} - \left|A_i M_0 A_i^T \right|^{\frac{1}{2}} \right| \\
			& \leq \frac{1}{m} \sum_{i=1}^m \left| \left|A_i M A_i^T\right| - \left|A_i M_0 A_i^T\right| \right|^{\frac{1}{2}} \\
			& \leq \left(\frac{1}{m} \sum_{i=1}^m \left| \left|A_i M A_i^T\right| - \left|A_i M_0 A_i^T\right| \right| \right)^\frac{1}{2} \\
			& \leq \left(\frac{1}{m} \sum_{i=1}^m  \left|A_i (M- M_0) A_i^T\right|  \right)^\frac{1}{2} \\
			& \leq \left(\frac{1}{m} \sum_{i=1}^m  \left|A_i (\beta uu^T - \beta_0 u_0u_0^T) A_i^T\right| +  \left|A_i (\gamma vv^T - \gamma_0 v_0v_0^T) A_i^T\right|\right)^\frac{1}{2} \\
			& \leq \left(\frac{1}{m} \sum_{i=1}^m  \left|A_i (\beta uu^T - \beta_0 u_0u_0^T) A_i^T\right| \right)^\frac{1}{2} + \left(\frac{1}{m}\sum_{i=1}^m \left|A_i (\gamma vv^T - \gamma_0 v_0v_0^T) A_i^T\right|\right)^{\frac{1}{2}} \\
			& \leq 2^{\frac{1}{4}}(1+\eps)^{\frac{1}{2}} \norm{\beta u u^T - \beta_0 u_0 u_0^T}_F^{\frac{1}{2}} + 2^{\frac{1}{4}}(1+\eps)^{\frac{1}{2}} \norm{\gamma v v^T - \gamma_0 v_0 v_0^T}_F^{\frac{1}{2}} \\
			& \leq  2^{\frac{9}{4}} (1+\eps)^{\frac{1}{2}}\epsilon^{\frac{1}{2}},
		\end{aligned}
	\end{equation*}
where the second inequality is by $||a| - |b||^2 \leq |a^2 - b^2|$ for any $a, b \in \mathbb{R}$, the third inequality is by concavity of $(\cdot)^2$, the fourth and the fifth inequalities are by triangle inequality, the sixth inequality is by $a^2 + b^2 \leq (a + b)^2$ for any $a, b \in \mathbb{R}$ and the seventh inequality is by the right hand side of equation \eqref{new_1}. Consequently, if $m > c_0 n \eps^{-2}\log \frac{1}{\epsilon}$
 \begin{equation}\label{eq:nn7}
	\frac{1}{m}\sum_{i=1}^m \left|A_iMA_i^T\right|^{\frac{1}{2}} \geq 0.77(1 - \eps - 2^{\frac{9}{4}}(1+\eps)^{\frac{1}{2}} \eps^{\frac{1}{2}})
\end{equation}
holds with probability at least $1 - 2\hat{C}\left(\frac{3}{\eps}\right)^{4n+2}\exp(-\hat{c}_0 m \eps^2) - C\exp(-c_1 \eps^2 m)$. As in \eqref{eq:change epsilon}, by making $c_0$ large, we are able to make the probability $\geq 1 - \hat{\hat{C}}\exp(-\hat{\hat{c}}_0 m \eps^2)$ for some constants $\hat{\hat{C}}$ and $\hat{\hat{c}}_0$. 
By letting $\tilde\epsilon := \eps + 2^{\frac{9}{4}}(1+\eps)^{\frac{1}{2}}\eps^{\frac{1}{2}}$ and adjust constants $\hat{\hat{C}}, \hat{\hat{c}}_0, c_0$ we arrive at the desired result. 
\hfill$\square$

\noindent
{\bf Proof of Lemma \ref{lm:new1}}:
If $x=0$ or $y=0$ or $x=y$, the inequality holds. 
Thus, in particular, by
the symmetry of \eqref{eq:nn1} in $x$ and $y$, we can assume that 
$\norm{x}\ge\norm{y}>0$. Dividing \eqref{eq:nn1} by $\norm{x}$, tells us that we can assume
$\norm{x} = 1$ and $\norm{y} = t$ for $t \in [0, 1]$. 
Set $\rho:=\frac{x^Ty}{\norm{y}} \in [0, 1]$, and define 
$h(t,\rho) := \sqrt{t^2 -2\rho t + 1} + \sqrt{t^2 + 2\rho t + 1} -1 -t = \norm{x+y} + \norm{x-y} - \norm{x} - \norm{y}$. 
If $x = y$, we are done; otherwise, set 
$q(t, \rho) := \frac{h(t,\rho)}{\sqrt{t^2 -2\rho t + 1}} = \frac{\norm{x+y} + \norm{x-y} - \norm{x} - \norm{y}}{\norm{x - y}}$, for each $(t,\rho)\in[0,1]\times[0,1]$. We now show that the minimum value of $q$ over $[0,1]\times[0,1]$ is $2-\sqrt{2}$.
	For fixed $t\in[0,1]$,
	\begin{equation*}
	\begin{aligned}
		\frac{\partial q(t, \rho)}{\partial \rho} & = \frac{t(t^2+1)}{(t^2 - 2\rho t + 1)^{\frac{3}{2}}} \left[\frac{2}{(t^2 + 2\rho t + 1)^{\frac{1}{2}}} - \frac{t+1}{t^2+1}\right] \\
		& \geq \frac{t(t^2+1)}{(t^2 - 2\rho t + 1)^{\frac{3}{2}}} \left[\frac{2}{t+1} - \frac{t+1}{t^2+1}\right]\\
		& \geq 0,
	\end{aligned}
	\end{equation*}
	where the first inequality follows since $t^2 + 2\rho t + 1 \leq (1+t)^2$ as $\rho\in[0,1]$, and the last inequality follows since $2(t^2+1) \geq (t+1)^2$. That is, $q(t, \rho)$ is increasing with respect to $\rho$ when $\rho \in [0, 1]$ for each fixed $t \in [0, 1]$. Also
	\[
		\frac{d q(t, 0)}{dt} = -\frac{1-t}{(1+t^2)^{\frac{3}{2}}} \leq 0.
	\]
	Hence $q(t, 0)$ is decreasing for $t \in [0, 1]$. We know for each $t \in [0, 1]$, $\rho \in [0, 1]$, 
	\[
		q(t, \rho) \geq q(t, 0) \geq q(1, 0) = 2-\sqrt{2} 
	\]
	Thus $h(t, \rho) \geq (2-\sqrt{2})\norm{x-y}$, which leads to the desired result.
\hfill$\square$

\noindent
{\bf Proof of Lemma \ref{lm:new5}}:
	If $x = y = 0$, we are done. Next assume at least one of $x$ and $y$ is non-zero. We assume $\norm{x} = 1$ and $\norm{y} = t \in [0, 1]$ since we can divide \eqref{eq:nn4} by $\max\{\norm{x}, \norm{y}\}$ on both sides. Set $\rho := \frac{x^Ty}{\norm{y}}$. We have
	
	\begin{equation*}
		\begin{aligned}
			\sqrt{2}\norm{xx^T - yy^T}_F &\! =\! \sqrt{2}\left(\sum_{i, j}(x_ix_j - y_iy_j)^2\right)^{\frac{1}{2}} \\
			& \! =\! \sqrt{2}\left( \!(\sum_i x_i^2)(\sum_j x_j^2) \!+\! 
		(\sum_i y_i^2)(\sum_j y_j^2) \!-\! 2 (\sum_{i}x_iy_i)(\sum_{j}x_jy_j)\! \right)^{\frac{1}{2}} \\
			& \! =\! \left(2(1+t^4)-4\rho^2 t^2\right)^{\frac{1}{2}}\\
			& \!\geq\! \left((1+t^2)^2-4\rho^2 t^2\right)^{\frac{1}{2}} \\
			& \! =\! \sqrt{1+t^2 + 2\rho t}\sqrt{1+t^2 - 2\rho t} \\
			& \! =\! \norm{x + y}\norm{x - y},
		\end{aligned}
	\end{equation*}
	where the inequality follows by the algebraic geometric mean inequality.
\hfill$\square$

\bibliographystyle{plain}
\bibliography{global_minimizer.bib}
\medskip

\end{document}